\documentclass{article}

\usepackage{preamble}

\usepackage{graphicx}
\usepackage[width=.75\textwidth, font=small]{caption}
\usepackage{xcolor}

\newcommand{\GX}{G_{B_\mathsf{X}}}

\title{Graphical Analysis of Lifted Product Code Constructions}

\usepackage{amssymb}
\usepackage{amsmath}
\usepackage{graphicx}
\usepackage{blkarray}

\newcommand*{\boldone}{\text{\usefont{U}{bbold}{m}{n}1}}
\newcommand*{\boldzero}{\text{\usefont{U}{bbold}{m}{n}0}}

\usepackage{xcolor}

\usepackage{tikz, tikz-cd}
\usepackage{wrapfig} 
\usepackage{subcaption}

\title{Graphical Analysis of Lifted Product Code Constructions} 

\author{Ragnar Freij-Hollanti, Kirsten D. Morris, Patricija Šapokaitė}

\begin{document}

\maketitle

\begin{abstract}
Lifted product codes are an important family of quantum low-density parity-check (QLDPC) codes, as they were the first QLDPC code family shown to be asymptotically good. Understanding the structure of their parity-check matrices $H_{\mathsf{X}}$ and $H_{\mathsf{Z}}$, as well as the associated Tanner graphs, is essential for analyzing their decoding behavior and error-floor performance. In this work, we show that the Tanner graphs of $H_{\mathsf{X}}$ and $H_{\mathsf{Z}}$ are indeed isomorphic, and investigate their graph-theoretical structure. We establish conditions ensuring the connectivity of these graphs and provide bounds on their minimal absorbing sets, providing new insight into the combinatorial structures influencing decoding performance.
\end{abstract}

\section{Introduction}
Quantum computing has significant promise for solving problems that are intractable for classical systems, but fully realizing this potential requires robust quantum error correction. Constructing quantum error correcting code families that are efficient and practically implementable remains an active area of research. 

Quantum low-density parity-check (LDPC) codes are particularly promising candidates for quantum error correction, especially given Gottesman's result proving constant overhead of fault tolerant quantum computation when using quantum LDPC codes with a constant encoding rate~\cite{Gottesman-overhead}. Despite this promise, constructing quantum LDPC codes that are asymptotically good while also maintaining desirable decoding performance in the finite block length regime has been nontrivial.

Pantaleev and Kalachev resolved this open question with the introduction of lifted product (LP) codes, the first construction of asymptotically good quantum LDPC codes~\cite{panteleev2022asymptotically}. These codes combine the protograph based structure from classical LDPC codes along with tensoring operations, yielding codes with both constant rate and growing distance. However, a detailed understanding of the structural and combinatorial properties of these codes, particularly explicit constructions, remains open.

Graph lifts have been extensively studied for classical LDPC codes, particularly in terms of their impact on connectivity, girth, and cycle structure~\cite{mitchell2014quasi, smarandache2025structural, dolecek2014non}. However, in the context of lifted product codes, the interplay between graph lifting and tensoring operations introduces new phenomena that are not yet fully characterized. In particular, the relationship between the base protograph and the resulting Tanner graph requires more care, influencing the analysis of structural properties such as connectivity, girth, and the presence of small substructures that influence decoding performance~\cite{mao2001heuristic, halford2005algorithm, karimi2012efficient}.

In this work, we investigate these questions for a class of lifted product constructions derived from a single base matrix $B$ as described in \cite{raveendran2025minimumdistancesfinitelengthlifted}. We establish precise conditions on $B$ that determine when the associated Tanner graph of $H_{\mathsf{X}}$ is connected. Building on this result, we then explore the implications of this connectivity criterion for the girth and absorbing sets of the resulting graphs.

This paper is organized as follows. In Section \ref{section:preliminaries}, we provide the necessary background. In Section \ref{section:connectivity}, we give conditions on the base matrix $B$ such that the resulting lifted graph $G_{H_{\mathsf{X}}}$ is connected. We provide an analysis of girth and minimal absorbing set structures in Section \ref{section:girth_absorbing}.   

\section{Preliminaries}\label{section:preliminaries}

\subsection{Graph Theory Tools}
Let $n$ be a natural number and let $[n]$ denote the set $\{1, 2, \dots, n\}$. We write $\boldzero_n$ and $\boldone_n$ to denote the $n\times n$ all-zeros and all-ones matrices, respectively, dropping the subscript $n$ when clear from context. We denote the $n \times n$ identity matrix as $I_n$.

A convenient way to represent a parity check matrix $H$ is through its \textit{Tanner graph} $G_H$ \cite{T81}.

\begin{definition}
For a binary linear code $\mathcal{C} \subseteq \mathbb{F}_2^n$ and a parity check representation $H \in \mathbb{F}_2^{m \times n}$ for $\mathcal{C}$ we associate a bipartite graph $G_H=(V,C;E)$, called the \textit{Tanner graph} of $\mathcal{C}$ corresponding to $H$. The vertices of the Tanner graph are the \textit{variable nodes} $V=\{v_1,\dots, v_n\}$ and the \textit{check nodes} $C=\{c_1,\dots, c_m\}$ and the edge set is given by $c_i \sim v_j$ if and only if the entry $h_{ij}$ in $H$ is nonzero.  
\end{definition}
 
We denote the girth of a graph $G$ by $g(G)$. An important combinatorial invariant of a Tanner graph is its \textit{absorbing sets}. These were first introduced by Dolecek et. al as a combinatorial structure whose presence is detrimental to the error floor of classical decoders \cite{Dolecek07}, later shown to also be harmful for syndrome-based bit flipping decoders in the quantum setting \cite{Morris2026AMC}.

\begin{definition}\label{def:absorbing}
Given a Tanner graph $G$, an \textit{$(a,b)$ absorbing set $\mathcal{A}$} is a subset of $a=|\mathcal{A}|$ variable nodes such that there are $b$ odd degree check nodes in the induced subgraph $G_{\mathcal{A} \cup \mathcal{N}(\mathcal{A})}$ and every variable node $v \in \mathcal{A}$ has strictly more even than odd degree neighbors in $G_{\mathcal{A} \cup \mathcal{N}(\mathcal{A})}$.
\end{definition}

Next, we establish necessary notation used to define our main objects of study, the {\em lifted product codes}, introduced in~\cite{panteleev2022asymptotically}.

\begin{definition}
We denote the group of \textit{$r \times r$ circulant permutations} as $\mathbf{C}_r=\{I, P, P^1, \dots, P^{r-1}\}$ where $P$ is the cyclic permutation mapping $i$ to $i+1 \mod r$.
\end{definition} 

With a slight abuse of notation, we will refer to the permutation matrices and their corresponding permutations interchangeably, with the particular object clear from the context. In particular, $$P=\begin{pmatrix}
    0&\dots &0&1\\
    1&0&\dots& 0\\
    \dots&&&\dots\\
    0&\dots &1&0
\end{pmatrix}.$$

Consider a graph $G=(V,E)$ with vertex set $V=\{v_1,\dots,v_n\},$
called the \emph{protograph}, and let $r\in \mathbb{N}$. A \emph{lift of degree $r$} (or simply an \emph{$r$-lift}) of $G$ is a graph $\tilde{G}$ with vertex set $
\widetilde{V}
=
\{v_{i,k} : v_i \in V,\; k \in [r]\},
$, together with a projection map
\[
\pi : \widetilde{G} \to G
\]
defined on vertices by
\[
\pi(v_{i,k})=v_i.
\] such that,
for each edge $e=v_iv_j \in E$, the corresponding edges $\pi^{-1}(e)$ in the lifted graph form a perfect matching between the fibers 
\[
\pi^{-1}(v_i)=\{v_{i,1},\dots,v_{i,r}\}
\mbox{ and }
\pi^{-1}(v_j)=\{v_{j,1},\dots,v_{j,r}\},
\]

We will be concerned with a particular permutation-based $r$-lift, constructed by assigning to each directed edge $e=v_iv_j \in E$ a circulant permutation $\tau_e \in \mathbf{C}_r$. The lifted edges corresponding to $e$ are then
\begin{equation}\label{eq:twist}
v_{i,k}v_{j,\tau_e(k)},
\qquad 1 \leq k \leq r.
\end{equation}
Edges are treated as directed when assigning permutations, and when the protograph $G$ is a Tanner graph, we assume all edges are oriented from variable nodes to check nodes in this paper. We refer to the bi-adjacency matrix associated to the protograph as the \emph{base matrix} and the bi-adjacency matrix associated with the lifted graph as the \emph{lifted matrix}. We will frequently have labeled edges in the protograph $G$, in which case the base matrix has these labels as entries.

There is a rich history of exploring the graph properties relating a protograph and its corresponding graph lift. We state a formulation of two of them here for reference. 

\begin{theorem}\label{thm:gross_tucker1}\cite{gross2001topological}
Let $\gamma$ be a $k$-cycle in a protograph $G$ such that the permutation product on $\gamma$ has order $m$ in the cyclic group $\mathbf{C}_r$. Then each component of the preimage $\pi^{-1}(\gamma)$ is a $km$-cycle, and there are $\frac{|\mathbf{C}_r|}{m}=\frac{r}{m}$ such components.    
\end{theorem}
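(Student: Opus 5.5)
We trace how the lifted edges act on the fibers as we go around $\gamma$. Write $\gamma = (u_0, u_1, \dots, u_{k-1}, u_k = u_0)$ with edges $e_i = u_{i-1}u_i$. Traversing $e_i$ in the direction $u_{i-1}\to u_i$ corresponds to the permutation $\sigma_i = \tau_{e_i}$ if that is the edge's assigned orientation, and $\sigma_i=\tau_{e_i}^{-1}$ otherwise. By \eqref{eq:twist}, the lifted edges over $e_i$ are exactly the pairs $\{(u_{i-1},t),(u_i,\sigma_i(t))\}$. Since $\mathbf{C}_r$ is abelian, the order in which the $\sigma_i$ are composed does not matter, and the permutation product is $\sigma = \sigma_k\circ\cdots\circ\sigma_1 = P^s$ for some $s$. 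Its order is $m = r/\gcd(r,s)$. Here we interpret ``the permutation product'' with inverses taken on edges traversed against their orientation, as is standard.

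First, $\pi^{-1}(\gamma)$ is $2$-regular. Each vertex $(u_i,t)$ lies in the fibers over exactly two edges, $e_i$ and $e_{i+1}$, and each edge lifts to a perfect matching, so $(u_i,t)$ has exactly one neighbour over each. Hence $\pi^{-1}(\gamma)$ is a disjoint union of cycles, and we only need to determine their lengths and number.

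Next, start at $(u_0,t)$ and follow the unique lifted edge over $e_1$, then over $e_2$, and so on. This is forced, since the cycle must alternate between the two incident edge-fibers. After one lap we reach $(u_0,\sigma(t))$, and after $j$ laps we reach $(u_0,\sigma^j(t)) = (u_0, t+js \bmod r)$. The walk first returns to $(u_0,t)$ at the smallest $j\ge1$ with $js\equiv 0 \pmod r$, namely $j=m$. No earlier repetition of a vertex can occur. Within a lap, the projections $u_i$ are distinct, because $\gamma$ is a cycle. Across laps, a repeat at some $u_i$ would, by the injectivity of the $\sigma$'s, pull back to a repeat at $u_0$. So the component through $(u_0,t)$ is a cycle of length exactly $km$. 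It meets the fiber over $u_0$ in the orbit $\{t+js\}$ of $t$ under $\langle P^s\rangle$, which has size $m$.

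Finally, every component meets the fiber $\pi^{-1}(u_0)$, which has $r$ vertices. Since these are partitioned into orbits of size $m$, there are $r/m = |\mathbf{C}_r|/m$ components. The only delicate points are the injectivity argument ruling out a shorter closed cycle, and the orientation convention, which does not change the order of the product up to inversion.
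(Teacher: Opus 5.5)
Your proof is correct. The paper gives no proof of its own and simply cites Gross--Tucker; your argument is essentially the standard one from that source. It lifts the walk around $\gamma$ uniquely from each vertex of the fibre over $u_0$. It then observes that the lift closes up after exactly $m$ laps because $\sigma=P^s$ acts on the fibre by the translation $t\mapsto t+s$, so every orbit of $\langle P^s\rangle$ has size exactly $m$, giving $r/m$ components, each a $km$-cycle.
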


Given a connected protograph $G$ with edges labeled by elements of $\mathbf{C}_r$, let $T$ be a spanning tree of $G$. The \emph{fundamental cycles} of $G$ are formed by taking each edge $e\notin T$ and considering the unique cycle in $T \cup e$.

\begin{theorem}\label{thm:gross_tucker2}\cite{gross2001topological}
Let $G$ be a connected protograph. The number of connected components of $\tilde{G}$ is the index in the cyclic group $\mathbf{C}_r$, of the subgroup generated by the net permutation products around the fundamental cycles of $G$.    
\end{theorem}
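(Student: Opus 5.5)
The statement is Theorem~\ref{thm:gross_tucker2}. I will prove it by a covering-graph argument: the components of $\widetilde G$ correspond to orbits of the fiber over one base vertex under the monodromy action of closed walks. Fix a spanning tree $T$ of $G$ and a root vertex $v_1$. For each directed edge $e=v_iv_j$ traversed backwards, I use the label $\tau_e^{-1}$. For a closed walk $W=e_1e_2\cdots e_\ell$ based at $v_1$, the net permutation $\tau_W$ is the composition of the (possibly inverted) labels along $W$. Lifting $W$ from $v_{1,k}$ gives a unique walk in $\widetilde G$, because each edge fiber is a perfect matching, and that walk ends at $v_{1,\tau_W(k)}$. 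Since $\mathbf{C}_r$ is abelian, the order of composition does not matter.

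The first step is to show that two vertices $v_{1,k}$ and $v_{1,k'}$ lie in the same component of $\widetilde G$ if and only if $k'=\tau(k)$ for some $\tau$ in the set $M=\{\tau_W : W \text{ closed at } v_1\}$. For the forward direction, project a path in $\widetilde G$ from $v_{1,k}$ to $v_{1,k'}$ via $\pi$; this gives a closed walk $W$, and unique path lifting gives $\tau_W(k)=k'$. The reverse direction is simply the lifting described above. Every vertex $v_{i,m}$ is connected to some $v_{1,k}$, by lifting the tree path from $v_i$ back to $v_1$. Hence the components of $\widetilde G$ are in bijection with the orbits of $M$ acting on $[r]$. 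Concatenating and reversing walks shows that $M$ is a subgroup of $\mathbf{C}_r$.

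The second step is to show that $M$ equals the subgroup $H$ generated by the fundamental cycle products. Concretely, for each non-tree edge $e$, let $\sigma_e$ be the product around the closed walk that goes along $T$ from $v_1$, crosses $e$, and returns along $T$. Because the group is abelian, conjugation by the tree-path products drops out, so $\sigma_e$ equals the net product around the fundamental cycle of $e$ itself; in particular, the choice of base point and direction only changes $\sigma_e$ to its inverse. This gives $H\subseteq M$. For $M\subseteq H$, write any closed walk as a sequence of edges and insert tree round-trips to the root between consecutive edges. Tree edges then contribute trivially, since each is traversed once in each direction in the telescoping, and each non-tree edge contributes $\sigma_e^{\pm1}$. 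So $\tau_W\in H$.

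Finally, I count the orbits. $\mathbf{C}_r$ acts regularly on $[r]$, so a subgroup $H$ of order $|H|$ acts freely, and every orbit has size $|H|$. The number of orbits, and hence of components, is therefore $r/|H|=[\mathbf{C}_r:H]$. I expect the main obstacle to be the bookkeeping in the second step: justifying cleanly that an arbitrary closed walk factors through fundamental cycles, with orientations and inverses handled correctly. Commutativity of $\mathbf{C}_r$ makes this much easier than in the general voltage-graph setting.
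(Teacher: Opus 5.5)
Your proof is correct. The paper gives no argument of its own for this statement and only cites Gross and Tucker; what you wrote is essentially the standard voltage-graph proof: unique walk lifting, components of $\tilde{G}$ corresponding to orbits of the local group on one fiber, the local group generated by the based fundamental closed walks, and the regular action of $\mathbf{C}_r$ turning the orbit count into the index. Your observation that commutativity makes the tree-path conjugations drop out is exactly what justifies the paper's phrasing in terms of the fundamental cycles themselves rather than based closed walks.
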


Crucially, $\tilde{G}$ is connected if and only if the net permutation products around the fundamental cycles of $G$ generates all of~$\mathbf{C}_r$. 

\begin{example}\label{ex:lift}
Consider a base matrix $A$ with entries $P^{\epsilon_{ij}} \in C_2$ for $i, j \in[2]$ and corresponding lifted matrix $A'$ in Equation \ref{eq:intro_example_lift_matrix}. \begin{equation}\label{eq:intro_example_lift_matrix}
A=\begin{bmatrix}
    P^0 & P^1\\
    P^1 & P^0\\
\end{bmatrix}, \qquad A'=\begin{bmatrix}
    1 & 0 & 0 & 1\\
    0 & 1 & 1 & 0\\
    0 & 1 & 1 & 0\\
    1 & 0 & 0 & 1\\
\end{bmatrix}.
\end{equation}

We illustrate the corresponding protograph $G_A$ and lifted graph in Figure \ref{fig:intro_example}. Consider the closed walk $W=v_1\sim c_1 \sim v_2 \sim c_2 \sim v_1$ in $G_A$. We associate to this closed walk the permutation product $P^0(P^1)^{-1}P^0(P^1)^{-1}=P^{-2}=\iota$ in $C_2$. The permutation product thus has order $m=1$, and hence the $4$-cycle of $G_B$ lifts to $\frac{|C_2|}{m}=\frac{2}{1}=2$ $4$-cycles in $\tilde{G}_B$, as described in Theorem \ref{thm:gross_tucker1}.
\end{example}

\begin{figure}[ht!] 
    \centering
    \begin{subfigure}{0.48\linewidth}
        \centering
        \begin{tikzpicture}[thick,scale=.9]

\node[circle,draw=black,label={left:\small{$v_1$}}] (v1) at (-0.75,1){};
\node[circle,draw=black,label={left:\small{$v_2$}}] (v2) at (-0.75,-1){};

\node[shape=rectangle,draw=black,label={right:\small{$c_1$}}] (c1) at (0.75,1){};
\node[shape=rectangle,draw=black,label={right:\small{$c_2$}}] (c2) at (0.75,-1){};

\path[-,thick] (v1) edge node[pos=0.25,, above] {\small{$P^0$}} (c1);
\path[-,thick] (v2) edge node[pos=0.05, above] {\small{$P^1$}} (c1);

\path[-,thick] (v1) edge node[pos=0.05, below, yshift=-4] {\small{$P^1$}} (c2);
\path[-,thick] (v2) edge node[pos=0.25, below] {\small{$P^0$}} (c2);

\end{tikzpicture}
    \end{subfigure}
    \hfill
    \begin{subfigure}{0.48\linewidth}
        \centering
\begin{tikzpicture}[thick,scale=.8]

\node[circle,draw=black, label={left:\small{$v_{1,1}$}}] (v1) at (-1.5,1){};
\node[circle,draw=black, label={left:\small{$v_{1,2}$}}] (v2) at (-1.5,0.33){};
\node[circle,draw=black, label={left:\small{$v_{2,1}$}}] (v3) at (-1.5,-0.33){};
\node[circle,draw=black, label={left:\small{$v_{2,2}$}}] (v4) at (-1.5,-1){};

\node[shape=rectangle,draw=black, label={right:\small{$c_{1,1}$}}] (c1) at (1.5,1.5){};
\node[shape=rectangle,draw=black, label={right:\small{$c_{1,2}$}}] (c2) at (1.5,0.5){};
\node[shape=rectangle,draw=black, label={right:\small{$c_{2,1}$}}] (c3) at (1.5,-0.5){};
\node[shape=rectangle,draw=black, label={right:\small{$c_{2,2}$}}] (c4) at (1.5,-1.5){};

\path[-,thick] (v1) edge (c1);
\path[-,thick] (v4) edge (c1);

\path[-,thick] (v2) edge (c2);
\path[-,thick] (v3) edge (c2);

\path[-,thick] (v2) edge (c3);
\path[-,thick] (v3) edge (c3);

\path[-,thick] (v1) edge (c4);
\path[-,thick] (v4) edge (c4);

\end{tikzpicture}
    \end{subfigure}
    \caption{A labeled protograph $G_A$ and its lift $G_A'$ from Ex. \ref{ex:lift}.}
    \label{fig:intro_example}
\end{figure}

\subsection{Lifted Product Code Construction}
Quantum low-density parity-check (QLDPC) codes are an important class of quantum error-correcting codes defined by having sparse parity check matrices, making them suitable for low compexity iterative decoding. In this work, we focus on lifted product codes, which are a subclass of Calderbank-Shor-Steane (CSS) codes\cite{CS96}, constructed from two classical linear codes. Consider two linear codes $\mathcal{C}_X, \mathcal{C}_Z \subseteq \mathbb{F}_2^n$ such that $\mathcal{C}_X \subseteq \mathcal{C}_Z^{\perp}$ and corresponding parity check matrices $H_{\mathsf{X}}$ and $H_{\mathsf{Z}}$. We construct a CSS code, denoted $CSS(C_{\mathsf{X}}, C_{\mathsf{Z}})$, with parity check matrix $H=\begin{bmatrix}
    H_{\mathsf{X}} & \boldzero\\
    \boldzero & H_{\mathsf{Z}}\\
\end{bmatrix}.$ The condition that $\mathcal{C}_X \subseteq \mathcal{C}_Z^{\perp}$ implies that $H_{\mathsf{Z}}H_{\mathsf{X}}^T=0$, hence the stabilizer generators for the code commute and $CSS(C_{\mathsf{X}}, C_{\mathsf{Z}})$ forms a valid \emph{stabilizer code}. In quantum communications channels, the $\mathsf{X}$ errors, or bit-flip errors, and $\mathsf{Z}$ errors, or phase-flip errors, are typically correlated, but we ignore this correlation for CSS codes \cite{mackay2004sparse} and decode $\mathsf{X}$ and $\mathsf{Z}$ errors separately, using $H_{\mathsf{X}}$ to decode $\mathsf{Z}$ errors and $H_{\mathsf{Z}}$ to decode $\mathsf{X}$ errors.

For this reason, the performance of iterative graph-based decoders for these codes reduces to understanding the two Tanner graphs separately. \emph{Lifted product codes} refer to a broad CSS code construction, using base matrices and some group action to construct a CSS code with good parameters. We consider the particular lifted product code construction as in \cite{raveendran2025minimumdistancesfinitelengthlifted}, and we describe this construction below.

Consider an $m \times n$ matrix $B$ with entries $b_{ij}=P^{\epsilon_{ij}}$, where $P$ is the generator of $\mathbf{C}_r$. We then construct the two base matrices
\begin{equation}\label{eq:parity_matrices}
\begin{split}
B_{\mathsf{X}}&=[B \otimes I_n \: | \: I_m \otimes B^{*}]\\
\text{and} \qquad B_{\mathsf{Z}} &= [I_n \otimes B \: | \: B^* \otimes I_m],
\end{split}
\end{equation}

for $B^*$ the conjugate transpose of $B$. That is, if $B=[b_{ij}]_{i \in [m], j \in [n]}$, then $B^*:=[b_{ji}^{-1}]_{j \in [n], i \in [m]}$ where $b_{ij}^{-1}=(P^{\epsilon_{ij}})^{-1}=P^{r-\epsilon_{ij}}$ is the inverse of $b_{ij}=P^{\epsilon_{ij}}$ in $\mathbf{C}_r$. 
We perform the $r$-lift described in equation \eqref{eq:twist} on the matrices $B_{\mathsf{X}}$ and $B_{\mathsf{Z}}$, and obtain the resulting matrices $H_{\mathsf{X}}$ and $H_{\mathsf{Z}}$. Given the dimensions $m\times n$ of
the base matrix $B$, we observe that both $H_{\mathsf{X}}$ and $H_{\mathsf{Z}}$ have $r(mn)$ rows and $r(m^2+n^2)$ columns. Since $H_{\mathsf{X}}H_{\mathsf{Z}}^T = \boldzero$, we obtain the resulting lifted product quantum CSS code $\mathcal{C}(H_{\mathsf{X}}, H_{\mathsf{Z}})$, which we denote as $\mathcal{C}$ with parameters $[[r(n^2+m^2), k^{\mathcal{C}}, d^{\mathcal{C}}]]$.

Observe that $G_B$ and $G_{B^*}$ have identical check and variable nodes, but with $G_B$ and $G_{B^*}$ have the edge label mappings $\ell_1$ and $\ell_2$, respectively, as seen in \eqref{eq:edge_labels}.

\begin{equation}\label{eq:edge_labels}
\begin{aligned}
\ell_1:E_B &\to L, 
& (c_i,v_j)&\mapsto b_{ij},\\
\ell_2:E_B &\to L, 
& (c_i,v_j)&\mapsto b_{ji}^{-1}.
\end{aligned}
\end{equation}

When analyzing the resulting Tanner graphs and permutation products arising from walks on these graphs, we implicitly assign traversals from variable nodes to check nodes with a positive orientation, and a traversal from check nodes to variable nodes with a negative orientation. That is, given nodes $v_j$ and $c_i$ with edge label $b_{ij}$, we associate a traversal from $v_j$ to $c_i$ with the permutation $b_{ij}$, and a traversal from $c_i$ to $v_j$ with the permutation $b_{ij}^{-1}$.  

Enumerate the check nodes of $G_{B_{\sf X}}$ as $c_{i,j}$, the variable nodes of $G_{B \otimes I_n}$ as $v_{k,j}$, and the variable nodes of $G_{I_m \otimes B^{*}}$ as $v_{i, \ell}'$ for $i, \ell \in [m]$ and $j,k \in [n]$. 
We observe that $G_{B \otimes I_n}$ is a disjoint union of $n$ copies of $G_B$, to be denoted $G_1,\dots G_n$, and $G_{I_m \otimes G^{*}}$ is a disjoint union of $m$ copies of $G_{B^*}\cong G_B$, to be denoted $G'_1,\dots G'_m$.
To visualize these subgraphs from the matrix perspective, for a $2\times 3$ base matrix $B= \begin{tiny}
 \begin{bmatrix}b_1&b_2&b_3\\b_4& b_5&b_6\end{bmatrix}\end{tiny}$, we write out the terms of \ $B_{\mathsf X}
= \left[
B\otimes I_3
\;\middle|\;
I_2\otimes B^*
\right]$ in \eqref{eq:B_X}.

\begin{equation}\label{eq:B_X}
\setcounter{MaxMatrixCols}{20}
\renewcommand{\arraystretch}{1.2}
\begin{blockarray}{cccccccccccccc}
&
v_{1,1} & v_{1,2} & v_{1,3}
&
v_{2,1} & v_{2,2} & v_{2,3}
&
v_{3,1} & v_{3,2} & v_{3,3}
&
v'_{1,1} & v'_{1,2}
&
v'_{2,1} & v'_{2,2}\\
\begin{block}{c[ccccccccccccc]}
c_{1,1}
& b_{1} & 0 & 0
& b_{2} & 0 & 0
& b_{3} & 0 & 0
& b_{1}^{-1} & b_{4}^{-1}
& 0 & 0\\
c_{1,2}
& 0 & b_{1} & 0
& 0 & b_{2} & 0
& 0 & b_{3} & 0
& b_{2}^{-1} & b_{5}^{-1}
& 0 & 0\\
c_{1,3}
& 0 & 0 & b_{1}
& 0 & 0 & b_{2}
& 0 & 0 & b_{3}
& b_{3}^{-1} & b_{6}^{-1}
& 0 & 0\\
c_{2,1}
& b_{4} & 0 & 0
& b_{5} & 0 & 0
& b_{6} & 0 & 0
& 0 & 0
& b_{1}^{-1} & b_{4}^{-1}\\
c_{2,2}
& 0 & b_{4} & 0
& 0 & b_{5} & 0
& 0 & b_{6} & 0
& 0 & 0
& b_{2}^{-1} & b_{5}^{-1}\\
c_{2,3}
& 0 & 0 & b_{4}
& 0 & 0 & b_{5}
& 0 & 0 & b_{6}
& 0 & 0
& b_{3}^{-1} & b_{6}^{-1}\\
\end{block}
\end{blockarray}
\end{equation}

In terms of graphs, $G_{B_X}$ thus decomposes into $n$ disjoint copies $G_1,\dots, G_n$ of $G_B$ ``on the left'' and $m$ disjoint copies $G_1,\dots, G_m$ ``on the right'', where the $j$:th check node of $G_i$ is identified with the $i$:th variable node of $G_j$, for $i\in[n]$, $j\in [m]$. These identified nodes then serve as the check nodes of the big graph $G_{B_X}$, as in Figure~\ref{fig:G_B_X}. The intersection graph between the subgraphs $G_i$, $G'_j$ thus itself forms a complete $(n,m)$- bipartite graph. 

We will fix notation, and denote the nodes and edges of $\GX$ as follows:

\begin{figure}
    \centering
\begin{tikzpicture}[
    scale=0.7,
    dot/.style={circle, fill=black, inner sep=1.5pt},
    sq/.style={draw=black, line width=1pt, minimum size=7pt, inner sep=0pt},
    bluearr/.style={->, cyan!80!blue, line width=1.5pt},
    greenarr/.style={->, green!60!black, line width=1.5pt},
    pants/.style={draw=red!85!black, line width=1.4pt},
    lab/.style={purple}
]
\foreach \y/\j in {
    7.05/1,
    6.35/2,
    5.65/3
}{
    \node[dot] at (-0.2,\y) {};
    \node[left] at (-0.35,\y) {$v_{1,\j}$};
}

\foreach \y/\j in {
    4.45/1,
    3.75/2,
    3.05/3
}{
    \node[dot] at (-0.2,\y) {};
    \node[left] at (-0.35,\y) {$v_{2,\j}$};
}

\foreach \y/\j in {
    1.85/1,
    1.15/2,
    0.45/3
}{
    \node[dot] at (-0.2,\y) {};
    \node[left] at (-0.35,\y) {$v_{3,\j}$};
}
\foreach \i/\col in {0/cyan,2/green!70!black}{
    \node[sq, fill=\col] at (2.85,6.8-0.35*\i) {};
}
\foreach \y in {7.05,6.35,5.65}\foreach \i in {6.8,6.1}{
    \draw[-,] (-0.2,\y) -- (2.85,\i) {};
}
\node[lab] at (1.6,5.35) {$G_1$};

\foreach \y in {4.45,3.75,3.05}{
    \node[dot] at (-0.2,\y) {};
}
\foreach \i/\col in {0/cyan,2/green!70!black}{
    \node[sq, fill=\col] at (2.85,4.2-0.35*\i) {};
}
\foreach \y in {4.45,3.75,3.05}\foreach \i in {4.2,3.5}{
    \draw[-,] (-0.2,\y) -- (2.85,\i) {};
}
\node[lab] at (1.6,2.75) {$G_2$};

\foreach \y in {1.85,1.15,0.45}{
    \node[dot] at (-0.2,\y) {};
}
\foreach \i/\col in {0/cyan,2/green!70!black}{
    \node[sq, fill=\col] at (2.85,1.6-0.35*\i) {};
}
\foreach \y in {1.85,1.15,0.45}\foreach \i in {1.6,0.9}{
    \draw[-,] (-0.2,\y) -- (2.85,\i) {};
}
\node[lab] at (1.6,0.15) {$G_3$};

\foreach \i in {0,2,4}{
    \node[sq, fill=cyan] at (5.35,5.8-0.35*\i) {};
}
\foreach \y/\j in {
    5.7/1,
    5.0/2
}{
    \node[dot] at (8.5,\y) {};
    \node[right] at (8.65,\y) {$v'_{1,\j}$};
}
\foreach \y in {5.7,5.0}\foreach \i in {5.8,5.1,4.4}{
    \draw[-,] (8.5,\y) -- (5.35,\i) {};
}

\foreach \i in {0,2,4}{
    \node[sq, fill=green!70!black] at (5.35,3.05-0.35*\i) {};
}
\foreach \y/\j in {
    2.5/1,
    1.8/2
}{
    \node[dot] at (8.5,\y) {};
    \node[right] at (8.65,\y) {$v'_{2,\j}$};
}
\foreach \y in {2.5,1.8}\foreach \i in {3.05,2.35,1.65}{
    \draw[-,] (8.5,\y) -- (5.35,\i) {};
}

\node[lab] at (7.2,4.1) {$G'_1$};

\node[lab] at (7.2,1.1) {$G'_2$};




\draw[-,cyan,line width=3pt] (5.35,5.8) -- (2.85,6.8);
\draw[-,cyan,line width=3pt] (5.35,5.1) -- (2.85,4.2);
\draw[-,cyan,line width=3pt] (5.35,4.4) -- (2.85,1.6);

\draw[-,green!70!black,line width=3pt] (5.35,3.2) -- (2.85,6.1);
\draw[-,green!70!black,line width=3pt] (5.35,2.5) -- (2.85,3.5);
\draw[-,green!70!black,line width=3pt] (5.35,1.8) -- (2.85,0.9);

\end{tikzpicture}
    \caption{The graph $G_{B_X}$ for an $m\times n$ base matrix $G_B$. The left and right endpoints of each of the colored lines in the middle represent the same vertex. The edge labels on the left are $b_{i,j}$, whereas the edge labels on the right are $b_{i,j}^{-1}$.}
    \label{fig:G_B_X}
\end{figure}
\begin{align*}
    \mbox{Variable nodes in } B\otimes I_n: &\ & v_{i,j} & \ &i,j\in[n].\\
    \mbox{Variable nodes in } I_m\otimes B^*: &\ & v'_{i,j} &\ & i,j\in[m].\\
    \mbox{Check nodes }:&\ &c_{i,j} &\ & i\in[n], j\in[m].\\
    \mbox{Edges in } B\otimes I_n: &\ &e_{i,j,k}: v_{i,j}\sim c_{j,k} &\ & i,j\in[n], k\in[m].\\
    \mbox{Edges in } (I_m \otimes B^*): &\ &e'_{i,j,k}: (v'_{k,j}) \sim c_{i,j} &\ & i\in[n], j,k\in[m].
\end{align*}

The edge $e_{i,j,k}$ is labeled $P^{\epsilon_{j,k}}$, and the edge $e'_{i,j,k}$ is labeled $(P^{\epsilon_{i,j}})^{-1}$, when oriented from variable nodes to check nodes.

With this notation, we have $$G_i = \GX[\{v_{i,1}\dots , v_{i,n}\}\cup\{c_{i,1},..., c_{i,m}\}]$$ and $$G'_i = \GX[\{c_{1,i}\dots , c_{n,i}\}\cup\{v'_{i,1},..., v'_{i,m}\}].$$ 

\begin{theorem}\label{thm:graph_iso}
The graphs $G_{H_{\mathsf{X}}}$ and $G_{H_{\mathsf{Z}}}$ are isomorphic.
\end{theorem}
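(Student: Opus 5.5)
We need an explicit isomorphism between $G_{H_{\mathsf X}}$ and $G_{H_{\mathsf Z}}$. Recall
\[
B_{\mathsf X}=[B\otimes I_n \mid I_m\otimes B^*], \qquad B_{\mathsf Z}=[I_n\otimes B \mid B^*\otimes I_m].
\]
The plan is to show that $B_{\mathsf Z}$ arises from $B_{\mathsf X}$ by permuting rows and columns, and then handle the lift. The key fact is the commutation identity $P_{\sigma}(X\otimes Y)P_{\sigma'}^{T}=Y\otimes X$, where $P_\sigma,P_{\sigma'}$ are the perfect-shuffle (commutation) permutation matrices for the relevant dimensions.

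\textbf{Step 1: the shuffles.} Let $K_{a,b}$ denote the permutation matrix sending index $(i,j)\mapsto(j,i)$ on $[a]\times[b]$. Then
\[
K_{m,n}(B\otimes I_n)K_{n,n}^T=I_n\otimes B, \qquad K_{m,n}(I_m\otimes B^*)K_{m,m}^T=B^*\otimes I_m .
\]
Both identities use the same row permutation $K_{m,n}$ on the $mn$ check indices. The column permutations act separately on the two column blocks, so the single pair
\[
\Bigl(K_{m,n},\ \mathrm{diag}(K_{n,n},K_{m,m})\Bigr)
\]
transforms $B_{\mathsf X}$ into $B_{\mathsf Z}$ entrywise, including the circulant labels. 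I would verify these identities directly at the level of entries, with indices spelled out as in the notation $v_{i,j}, v'_{i,j}, c_{i,j}$: the $(c_{i,k},v_{j,\cdot})$ entry of one matrix must equal the swapped-index entry of the other. This is routine bookkeeping.

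\textbf{Step 2: the bijection on base nodes.} Translated to Tanner graphs, this gives a label-preserving bijection $\phi$ from the nodes of $G_{B_{\mathsf X}}$ to those of $G_{B_{\mathsf Z}}$. It swaps the two indices of every check node and of every variable node in each block. It maps check nodes to check nodes and variable nodes to variable nodes, and each edge labeled $P^{\epsilon}$ goes to an edge with the same label $P^{\epsilon}$ and the same orientation (variable to check).

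\textbf{Step 3: lifting the isomorphism.} Since $H_{\mathsf X}$ and $H_{\mathsf Z}$ are the $r$-lifts of $B_{\mathsf X}$ and $B_{\mathsf Z}$ via \eqref{eq:twist}, define
\[
\tilde\phi(x_{k})=\phi(x)_{k}
\]
on fibers. An edge $x_k y_{\tau_e(k)}$ maps to $\phi(x)_k\phi(y)_{\tau_e(k)}$, which is a lifted edge of $G_{H_{\mathsf Z}}$ because the label $\tau_{\phi(e)}=\tau_e$ is preserved. Equivalently, at the matrix level $H_{\mathsf Z}=(K_{m,n}\otimes I_r)H_{\mathsf X}(\mathrm{diag}(K_{n,n},K_{m,m})\otimes I_r)^T$. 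Bijectivity is immediate, so this proves Theorem~\ref{thm:graph_iso}.

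\textbf{Main obstacle.} The only delicate point is the index conventions. The Kronecker products here act on block matrices whose entries are elements of $\mathbf{C}_r$ rather than scalars, so I must check that the shuffle identity holds with the labels carried along, and that the conjugate transpose $B^*$ appears with the same inverses on both sides. Because the shuffle only permutes positions and never transposes entries, the labels (including the inverses in $B^*$) are carried over unchanged.
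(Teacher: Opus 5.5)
Your proposal is correct and follows essentially the same route as the paper. Both use the same commutation (perfect-shuffle) permutations, $K_{m,n}$ on rows and $\mathrm{diag}(K_{n,n},K_{m,m})$ on columns, to carry $B_{\mathsf X}$ to $B_{\mathsf Z}$ entrywise with labels preserved. Your fiberwise map $\tilde\phi(x_k)=\phi(x)_k$, equivalently conjugation by $K_{m,n}\otimes I_r$ and $\mathrm{diag}(K_{n,n},K_{m,m})\otimes I_r$, spells out the lifting step that the paper only asserts (``this correspondence lifts to $H_{\mathsf X}$ and $H_{\mathsf Z}$'').
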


\begin{proof}
The Kronecker product has the property that $A \otimes B = P(B \otimes A)Q$ for some permutation matrices $P$ and $Q$ \cite{henderson1981vec}. We use this property to construct permutation matrices $P$ and $Q$ such that $B_{\mathsf{X}} = P(B_{\mathsf{Z}})Q$. Let $e_i^{(m)}$ denote the standard basis vector of length $m$ and $e_j^{(n)}$ denote the standard basis vector of length $n$. Define $P_{m,n}\in\F_2^{mn\times nm}$ as the matrix associated to the linear map $F_2^m\otimes F_2^n \to F_2^m\otimes F_2^n$ that is defined by
\[
P_{m,n}\left(e_i^{(m)}\otimes e_j^{(n)}\right)
=
e_j^{(n)}\otimes e_i^{(m)}.
\]
That is, the columns of $P_{m,n}$ are indexed by pairs $(i,j)\in [m]\times [n]$, corresponding to basis vectors
$e_i^{(m)}\otimes e_j^{(n)}$, while the rows are indexed by pairs $(j,i)\in[n]\times [m]$,
corresponding to basis vectors $e_j^{(n)}\otimes e_i^{(m)}$. The entries in the matrix $P_{m,n}$ are given by 
\[
P_{m,n}(a,b) = \begin{cases}
    1 & \: \text{for} \: a=(j,i) \: \text{and} \: b=(i,j)\\
    0 & \: \: \text{else}. 
\end{cases}
\]
Set $
P:=P_{m,n}$ and $Q:=
\begin{bmatrix}
P_{n,n} & 0\\
0 & P_{m,m}
\end{bmatrix}.$
Then
\[
P
\begin{bmatrix}
B\otimes I_n & I_m\otimes B^*
\end{bmatrix}
Q^T
=
\begin{bmatrix}
I_n\otimes B & B^*\otimes I_m
\end{bmatrix}.
\]

To see this, right multiplication by $P_{n,n}^T$ permutes the columns of
$B\otimes I_n$ by swapping the two tensor coordinates:
\[
e_i^{(n)}\otimes e_j^{(n)}
\longmapsto
e_j^{(n)}\otimes e_i^{(n)}.
\]
Similarly, right multiplication by $P_{m,m}^T$ permutes the columns of
$I_m\otimes B^*$ by swapping
\[
e_i^{(m)}\otimes e_j^{(m)}
\longmapsto
e_j^{(m)}\otimes e_i^{(m)}.
\] 

Left multiplication by $P=P_{m,n}$ permutes the rows by sending the
row label
\[
e_b^{(m)}\otimes e_c^{(n)}
\longmapsto
e_c^{(n)}\otimes e_b^{(m)}.
\]
Equivalently, it relabels rows from pairs $(b,c)$ to pairs $(c,b)$. The first block satisfies
\[
P_{m,n}(B\otimes I_n)P_{n,n}^T = I_n\otimes B,
\]
and the second block satisfies
\[
P_{m,n}(I_m\otimes B^*)P_{m,m}^T = B^*\otimes I_m.
\]

This correspondence between $B_{\mathsf{X}}$ and $B_{\mathsf{Z}}$ lifts to $H_{\mathsf{X}}$ and $H_{\mathsf{Z}}$.  Graphically, this relabeling of the row and column indices is equivalent to relabeling the check nodes and variable nodes, and thus the Tanner graphs corresponding to $H_{\mathsf{X}}=\begin{bmatrix}
    B \otimes I_n \:\: I_m \otimes B^* 
\end{bmatrix}$ and $H_{\mathsf{Z}}=\begin{bmatrix}
    I_n \otimes B \:\: B^* \otimes I_m
\end{bmatrix}$ are isomorphic. 
\end{proof}

As a consequence of Theorem \ref{thm:graph_iso}, it suffices to analyze just one of the Tanner graph pairs $(G_{B_{\mathsf{X}}}, G_{H_{\mathsf{X}}})$ or $(G_{B_{\mathsf{Z}}}, G_{H_{\mathsf{Z}}})$  when investigating graph properties.

\section{Conditions on Connectivity}\label{section:connectivity}

In this section, we will determine the connectivity of the lifted graph $G_{H_X}$ in terms of the permutation products along cycles in the base graph, extending Theorem~\ref{thm:gross_tucker1}. Our first tool will be the specialization of Theorem~\ref{thm:gross_tucker1} to $2\times 2$ base matrices.

\begin{lemma} 
Consider a $2 \times 2$ base matrix $B$ with entries in $\mathbf{C}_r$. That is, 
$$B= \begin{bmatrix}
    b_{11} & b_{12}\\
    b_{21} & b_{22}\\
\end{bmatrix}$$
with entries $b_{ij}=P^{\varepsilon_{ij}}$ for $\varepsilon_{ij} \in \{0,1, \dots r-1\}$ for $$P=\begin{tiny}\begin{bmatrix}
    0&\dots &0&1\\
    1&0&\dots& 0\\
    \dots&&&\dots\\
    0&\dots &1&0
\end{bmatrix}\end{tiny}\in \mathbf{C}_r.$$ 

The graph lift $\tilde{G}_B$ is connected if and only if $\gcd(\beta,r)=1$, where
\begin{equation}\label{eq:beta}
\beta := \varepsilon_{11} - \varepsilon_{12} - \varepsilon_{21} + \varepsilon_{22} \pmod{r}.
\end{equation}
\end{lemma}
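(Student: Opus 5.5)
The plan is to apply Theorem~\ref{thm:gross_tucker2} directly to the protograph $G_B$ of the $2\times 2$ base matrix. Since every entry $b_{ij}$ is a power of $P$, and hence nonzero, $G_B$ is the complete bipartite graph $K_{2,2}$ on $\{v_1,v_2\}\cup\{c_1,c_2\}$. It is connected, with $4$ vertices and $4$ edges, so its cycle rank is $4-4+1=1$.

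First I fix a spanning tree $T$ consisting of the three edges $v_1c_1$, $v_1c_2$, $v_2c_1$. The single non-tree edge $v_2c_2$ then gives exactly one fundamental cycle,
\[
W = v_1 \sim c_1 \sim v_2 \sim c_2 \sim v_1 .
\]

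Next I compute the net permutation product around $W$, using the orientation convention fixed earlier: a variable-to-check traversal contributes $b_{ij}$, and a check-to-variable traversal contributes $b_{ij}^{-1}$. This gives
\[
b_{11}\,b_{12}^{-1}\,b_{22}\,b_{21}^{-1} = P^{\varepsilon_{11}-\varepsilon_{12}+\varepsilon_{22}-\varepsilon_{21}} = P^{\beta}.
\]
Because $\mathbf{C}_r$ is abelian, the order of the factors does not matter. Reversing the direction of the cycle or choosing a different spanning tree only replaces $P^\beta$ by $P^{-\beta}$, which generates the same subgroup. This is the one step where I would be careful to check consistency with Example~\ref{ex:lift}: there $\beta = 0-1-1+0 = -2 \equiv 0 \pmod 2$, which indeed gives two components.

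Finally, by Theorem~\ref{thm:gross_tucker2}, the number of components of $\tilde G_B$ equals the index $[\mathbf{C}_r : \langle P^\beta\rangle]$. Since $\mathbf{C}_r\cong\mathbb{Z}_r$, the subgroup $\langle P^\beta\rangle$ has order $r/\gcd(\beta,r)$, so the index is $\gcd(\beta,r)$. Here we use the convention $\gcd(0,r)=r$, which covers the degenerate case $\beta\equiv 0$. Therefore $\tilde G_B$ is connected exactly when $\gcd(\beta,r)=1$. As a by-product, the proof shows that $\tilde G_B$ has exactly $\gcd(\beta,r)$ components.

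I expect no real obstacle in this argument. The only delicate point is matching the signs in the cycle product to the paper's orientation convention, and it suffices to note that any consistent convention yields $\pm\beta$.
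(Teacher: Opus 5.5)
Your proof is correct and follows the paper's argument: both note that $G_B=K_{2,2}$ is a single $4$-cycle whose net permutation product is $P^{\beta}$, and both count the components of its lift. The only difference is that the paper cites Theorem~\ref{thm:gross_tucker1}, which gives $r/\text{ord}(P^\beta)$ components for the preimage of the cycle, while you cite Theorem~\ref{thm:gross_tucker2} with the single fundamental cycle; the resulting index $[\mathbf{C}_r:\langle P^\beta\rangle]=\gcd(\beta,r)$ is the same number, and your version states the component count explicitly.
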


\begin{proof}The base graph $G_B=K_{2,2}$ is a 4-cycle, so $\tilde{G_B}$ consists only of the preimage of this cycle. By Theorem~\ref{thm:gross_tucker1}, this preimage has $\frac{r}{m}$ connected components, where $m$ is the order of the permutation product $b_{1,1}b_{1,2}^{-1}b_{2,2}b_{2,1}^{-1}=P^\beta$.

But the order of $P^\beta$ is the smallest integer $m$ such that $P^{\beta m}=\iota$, in other words such that $\beta m$ is a multiple of $r$. This number is clearly equal to $r$ if and only if $\gcd(r,\beta)=1$, which is thus precisely when the number $\frac{r}{m}$ of connected components equals $1$.\end{proof} 

We extend this notion of connectivity to the graph lift $G_{H_{\mathsf{X}}}=\tilde{G}_{B_{\mathsf{X}}}$. In order to apply Theorems~\ref{thm:gross_tucker1} and~\ref{thm:gross_tucker2}, we will study permutation products along cycles in $G_{B_{\mathsf{X}}}$. We will do this by relating cycles in $G_{B_{\mathsf{X}}}$ to cycles in $G_B$, in two different ways.

\begin{definition}
    Let $G$ be a bipartite graph with vertex set $$\{v_1,\dots , v_m\}\sqcup \{w_1,\dots , w_n\},$$ with edges $e_{ij}$ labeled by a permutation $\sigma_{ij}$ between vertices $v_i$ and $w_j$ for some values of $1\leq i\leq m$, $1\leq j\leq n$. We then construct a graph $G'_L$ from $G'$ by adding a loop labeled $\iota$ at $w_j$ for every $1\leq i\leq m$. Similarily, we construct a graph $G_R$ from $G$ by first relabeling the edges $e_{ij}$ by $\sigma^{-1}_{ij}$, and then adding a loop labelled $0$ at $v_j$ for every $1\leq i\leq n$.
\end{definition}

The indices $L$ and $R$ are meant to indicate that these are the ``left'' and ``right'' copies of $G$ in $G_{B_\mathsf{X}}$, as explained in the following theorem.

\begin{figure}
    \centering
\begin{tikzpicture}[
    scale=0.6,
    dot/.style={circle, fill=black, inner sep=1.5pt},
    sq/.style={draw=black, line width=1pt, minimum size=7pt, inner sep=0pt},
    bluearr/.style={->, cyan!80!blue, line width=1.5pt},
    greenarr/.style={->, green!60!black, line width=1.5pt},
    pants/.style={draw=red!85!black, line width=1.4pt},
    lab/.style={purple},
    every loop/.style={},
]
\foreach \y in {7.05,6.35,5.65}{
    \node[dot] at (-0.2,\y) {};
}
\foreach \i/\col in {0/cyan,2/green!70!black}{
    \node[sq, fill=\col] at (2.85,6.8-0.35*\i) {};
}
\foreach \y in {7.05,6.35,5.65}\foreach \i in {6.8,6.1}{
    \draw[-,] (-0.2,\y) -- (2.85,\i) {};
}

\foreach \y in {4.45,3.75,3.05}{
    \node[dot] at (-0.2,\y) {};
}
\foreach \i/\col in {0/cyan,2/green!70!black}{
    \node[sq, fill=\col] at (2.85,4.2-0.35*\i) {};
}
\foreach \y in {4.45,3.75,3.05}\foreach \i in {4.2,3.5}{
    \draw[-,] (-0.2,\y) -- (2.85,\i) {};
}

\foreach \y in {1.85,1.15,0.45}{
    \node[dot] at (-0.2,\y) {};
}
\foreach \i/\col in {0/cyan,2/green!70!black}{
    \node[sq, fill=\col] at (2.85,1.6-0.35*\i) {};
}
\foreach \y in {1.85,1.15,0.45}\foreach \i in {1.6,0.9}{
    \draw[-,] (-0.2,\y) -- (2.85,\i) {};
}

\foreach \y in {-2,-2.9,-3.8}{
    \node[dot] at (-1,\y) {};
}
\foreach \z in {-2.45,-3.35}{
    \node[dot] (A) at (2.85,\z) {};
    \draw [-] (A) edge [distance=1cm, in=-30, out=30] (A);
}
\foreach \y in {-2,-2.9,-3.8}\foreach \z in {-2.45,-3.35}{
    \draw[-,] (-1,\y) -- (2.85,\z) {};
}

\draw[->](1.5,-0.2)--(0.5,-1.2);
\node[lab] at (1.5,-1.2) {$\Phi_L$};

\draw[->](6.6,-0.2)--(7.6,-1.2);
\node[lab] at (6.6,-1.2) {$\Phi_R$};

\foreach \y in {-2,-2.9,-3.8}{
    \node[dot] (A) at (5.35,\y) {};
        \draw [-] (A) edge [distance=1cm, in=150, out=210] (A);
}
\foreach \z in {-2.45,-3.35}{
    \node[dot] (A) at (9.1,\z) {};
}
\foreach \y in {-2,-2.9,-3.8}\foreach \z in {-2.45,-3.35}{
    \draw[-,] (5.35,\y) -- (9.1,\z) {};
}
\node[lab] at (0.5,-4.5) {$G_L$};
\node[lab] at (7.5,-4.5) {$G_R$};

\foreach \i in {0,2,4}{
    \node[sq, fill=cyan] at (5.35,5.8-0.35*\i) {};
}
\foreach \y in {5.7,5.0}{
    \node[dot] at (8.5,\y) {};
}
\foreach \y in {5.7,5.0}\foreach \i in {5.8,5.1,4.4}{
    \draw[-,] (8.5,\y) -- (5.35,\i) {};
}

\foreach \i in {0,2,4}{
    \node[sq, fill=green!70!black] at (5.35,3.05-0.35*\i) {};
}
\foreach \y in {2.5,1.8}{
    \node[dot] at (8.5,\y) {};
}
\foreach \y in {2.5,1.8}\foreach \i in {3.05,2.35,1.65}{
    \draw[-,] (8.5,\y) -- (5.35,\i) {};
}




\draw[-,cyan,line width=3pt] (5.35,5.8) -- (2.85,6.8);
\draw[-,cyan,line width=3pt] (5.35,5.1) -- (2.85,4.2);
\draw[-,cyan,line width=3pt] (5.35,4.4) -- (2.85,1.6);

\draw[-,green!70!black,line width=3pt] (5.35,3.2) -- (2.85,6.1);
\draw[-,green!70!black,line width=3pt] (5.35,2.5) -- (2.85,3.5);
\draw[-,green!70!black,line width=3pt] (5.35,1.8) -- (2.85,0.9);

\end{tikzpicture}  
    \caption{The projections $\Phi_L:\GX\to G_L$ and  $\Phi_R:\GX\to G_R$.}
    \label{fig:G_B_X_proj}
\end{figure}

\begin{theorem}\label{thm:projectcycles}
    There are two graph homomorphisms $\Phi_L: \GX\to G_L$ and $\Phi_R: \GX\to G_R$, given by \begin{align*}
        \Phi_L: v_{ij}&\mapsto v_i\\
        c_{ij}&\mapsto w_j\\
        w_{ij}&\mapsto w_j,
    \end{align*}
    and analogously \begin{align*}
        \Phi_R: v_{ij}&\mapsto v_j\\
        c_{ij}&\mapsto v_i\\
        w_{ij}&\mapsto w_i.
    \end{align*}
    If $\gamma$ is a closed walk in $\GX$, then the permutation product along $\gamma$ equals the product of the permutation products along the closed walks $\Phi_R(\gamma)$ and $\Phi_L(\gamma)$.
\end{theorem}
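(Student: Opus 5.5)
The plan is to verify the two claims separately: first that $\Phi_L$ and $\Phi_R$ are graph homomorphisms that respect edge labels in a controlled way, and then that the permutation product is additive in the exponents. Throughout, a loop labelled $\iota$ (the ``$0$'' loop in the definition of $G_R$) contributes the identity. Since $\mathbf{C}_r$ is cyclic, hence abelian, the permutation product of a closed walk is $P^{s}$, where $s$ is the signed sum of the exponents of its edge labels. An edge traversed from variable to check node contributes $+\varepsilon$, and traversed the other way contributes $-\varepsilon$. This is what makes the product formula plausible at all: the order of the factors along $\gamma$ is irrelevant, so it suffices to show that every edge of $\gamma$ contributes its exponent to exactly one of $\Phi_L(\gamma)$, $\Phi_R(\gamma)$ and the identity to the other.

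\textbf{Step 1 ($\Phi_L$ is a homomorphism).} I would fix a consistent indexing, since the edge list before Theorem~\ref{thm:graph_iso} mixes the roles of the indices. The convention I would take is that $G_i$ has variable nodes $v_{i,1},\dots,v_{i,n}$ and checks $c_{i,1},\dots,c_{i,m}$. With this convention $\Phi_L$ restricted to $G_i$ should be the isomorphism onto the copy of $G_B$ inside $G_L$, preserving the labels $b_{jk}$. I would read the stated formula $v_{ij}\mapsto v_i$ as sending the $j$-th variable node of $G_i$ to the $j$-th variable node of $G_B$, adjusting subscripts if needed. An edge of a right copy $G'_k$ joins a primed variable node $v'_{k,\cdot}$ to a check node $c_{\cdot,k}$. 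Under $\Phi_L$ both endpoints go to the same check vertex $w_k$, by the rules $c_{ij}\mapsto w_j$ and $v'_{ij}\mapsto w_j$ (the formula's $w_{ij}$). Such an edge is therefore sent to the loop at $w_k$ labelled $\iota$.

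\textbf{Step 2 ($\Phi_R$ is a homomorphism).} Symmetrically, each right copy $G'_k$ should map isomorphically onto $G_R$, with labels $b_{ij}^{-1}$ matching the relabelling $\sigma_{ij}^{-1}$ in the definition of $G_R$. Each left edge $v_{i,j}\sim c_{i,k}$ has both endpoints sent to the same vertex of $G_R$, so it becomes a loop labelled $\iota$. Checking this reduces to reading off the two blocks of \eqref{eq:B_X}: row $c_{i,k}$ contains $b$-entries exactly in the left block belonging to $G_i$, and $b^{-1}$-entries exactly in the right block belonging to $G'_k$.

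\textbf{Step 3 (orientation and the product formula).} I would check that orientation is preserved. Variable nodes of $\GX$ map to the ``variable'' side of $G_L$ and $G_R$, and checks to the ``check'' side, except on edges that collapse to loops, where the label is $\iota$ and orientation does not matter. Consequently a traversal of edge $e$ of $\gamma$ contributes $\ell(e)^{\pm1}$ to exactly one image walk and $\iota$ to the other. The images $\Phi_L(\gamma)$ and $\Phi_R(\gamma)$ are closed walks, since homomorphisms send closed walks to closed walks. Summing exponents then gives $s(\gamma)=s(\Phi_L(\gamma))+s(\Phi_R(\gamma))$, which is the claimed identity of permutation products by commutativity.

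The main obstacle I expect is purely bookkeeping: reconciling the paper's index conventions for $v_{i,j}$, $c_{i,j}$, $v'_{i,j}$ and the edge lists so that Steps 1 and 2 hold. In particular, the right edges must land on loops under $\Phi_L$, which requires the check index matching $v'$ to be the second index of $c$. I would first settle these indices against the explicit matrix \eqref{eq:B_X} and Figure~\ref{fig:G_B_X_proj}, and then state the maps accordingly.
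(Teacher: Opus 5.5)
Your proposal is correct and follows the paper's proof. Every edge of $\GX$ lies in exactly one left copy $G_i$ or right copy $G'_j$; one projection maps it label-preservingly and the other collapses it to an $\iota$-loop, so commutativity of $\mathbf{C}_r$ splits the permutation product.

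Your explicit orientation check and index reconciliation cover details the paper glosses over, and both matter. First, for the inverted labels of $G_R$ to contribute the right exponents, $G_R$ must be oriented from the row-indexed vertices of $B$ to the column-indexed ones, as in $G_{B^*}$. Second, in your Step 1 the literal rule sends $v'_{k,\ell}\mapsto w_\ell$ rather than $w_k$, so the shared index must be placed as you anticipate in your last paragraph.
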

\begin{proof}
    Clearly, $\Phi_L$ maps edges in $G_i$ to edges in $G_L$ for $1\leq i\leq n$, and edges in $G'_j$ to loops in $G_L$ for $1\leq i\leq m$. Since every edge in $\GX$ is an edge in exactly one such subgraph, it follows that $\Phi_L$ is a homomorphism. The proof that $\Phi_R$ is a homomorphism is identical (with the roles of $G_i$ and $G'_j$ reversed). Moreover, $\Phi_L$ preserves edge labels on $G_i$ and $\Phi_R$ preserves edge labels on $G'_j$, whereas all other edges are sent to loops with label $\iota$.

    Let $\gamma$ be a closed walk in $\GX$. For each edge in $\gamma$, it thus has its label preserved by exactly one of the two maps $\Phi_R$ and $\Phi_L$. The product of all the labels thus satisfies $$
    \prod_{e\in E(\gamma)}\tau_{\Phi_R(e)} \prod_{e\in E(\gamma)}\tau_{\Phi_L(e)} = \prod_{e\in E(\gamma)}\tau_{\Phi_R(e)} \tau_{\Phi_L(e)} = \prod_{e\in E(\gamma)}\iota\tau_{e} = \prod_{e\in E(\gamma)}\tau_{e},$$ which concludes the proof.
\end{proof}

To complete the analysis, consider an arbitrary connected directed graph $H$ whose edges are labeled by permutations in $\mathbf{C}_r$. We will denote by $\mathbf{C}_r(H)$ the subgroup of $\mathbf{C}_r$ of permutation products along closed walks in $H$, noticing that this is indeed a group by the commutativity of $\mathbf{C}_r$ and the connectedness of $H$. Indeed, if $\gamma$ and $\gamma'$ are two closed walks and $\pi$ is a path from the base point of $\gamma$ to that of $\gamma'$, then $\gamma\pi\gamma'\pi^{-1}$ is a closed walk whose permutation product is the product of those of $\gamma$ and $\gamma'$.

Clearly, this subgroup is generated by permutation products along cycles, and indeed by permutation products along fundamental cycles with respect to a spanning tree, so Theorem~\ref{thm:gross_tucker2} can be applied.

\begin{corollary}\label{cor:subgroups}
   Let $B$ be an $m\times n$ base matrix with entries in $\{0\}\cup \mathbf{C}_r$, and let $B_X=[B\otimes I_n\ |\ I_m\otimes B^*]$. Then $\Gamma(\GX)=\Gamma(G_B)$.
\end{corollary}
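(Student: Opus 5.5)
Here $\Gamma(H)$ is the subgroup $\mathbf{C}_r(H)$ of permutation products along closed walks in $H$. The plan is to prove the two inclusions separately. Theorem~\ref{thm:projectcycles} gives $\Gamma(\GX)\subseteq\Gamma(G_B)$. An explicit embedding of closed walks of $G_B$ into $\GX$ gives the reverse inclusion. Throughout I would take $G_B$ connected; otherwise $\Gamma(G_B)$ is not well defined as a group. Then $\GX$ is connected too, since each $G_i$, $G'_j$ is a copy of $G_B$ and their intersection pattern is complete bipartite.

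\textbf{Inclusion $\Gamma(\GX)\subseteq\Gamma(G_B)$.} Let $\gamma$ be a closed walk in $\GX$. By Theorem~\ref{thm:projectcycles}, its permutation product equals the product of the permutation products along $\Phi_L(\gamma)$ in $G_L$ and along $\Phi_R(\gamma)$ in $G_R$. Loops carry the label $\iota$, so deleting them does not change the product. Hence $\Gamma(G_L)=\Gamma(G_B)$.

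$G_R$ is $G_B$ with every label inverted, i.e.\ $G_{B^*}$. Reversing the orientation convention, the product along a closed walk in $G_R$ is the inverse of the product along the same walk in $G_B$. So $\Gamma(G_R)=\Gamma(G_B)^{-1}=\Gamma(G_B)$. Since $\Gamma(G_B)$ is a subgroup, the product of the two contributions lies in $\Gamma(G_B)$.

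\textbf{Inclusion $\Gamma(G_B)\subseteq\Gamma(\GX)$.} Each copy $G_i$ is an induced subgraph of $\GX$ isomorphic to $G_B$ with labels preserved, since the edges $e_{i,j,k}$ carry the labels $P^{\epsilon_{j,k}}$. So any closed walk in $G_B$ lifts verbatim to a closed walk in $G_1\subseteq\GX$ with the same permutation product. Hence $\Gamma(G_B)\subseteq\Gamma(\GX)$.

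\textbf{Expected obstacle.} The main care point is the index and orientation bookkeeping. First, the edges of $G_i$ must carry exactly the labels of $G_B$, not a transposed version; this needs checking against the notation of edges $e_{i,j,k}$, where the role of the first index is somewhat ambiguous. Second, in the first inclusion, the orientation convention (variable$\to$check positive) must be respected when $\Phi_R$ turns check nodes of $G'_j$ into variable-side vertices of $G_R$. If the orientation flips there, one gets inverses of labels rather than labels, but the closure of a subgroup under inverses makes the conclusion unchanged. I would handle both points by noting that either way each projected product lies in $\Gamma(G_B)$ or its inverse set, which coincide.
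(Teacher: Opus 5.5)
Your proposal is correct and follows the paper's proof. It shows $\Gamma(G_B)\subseteq\Gamma(\GX)$ via the label-preserving copy of $G_B$ inside $\GX$, and $\Gamma(\GX)\subseteq\Gamma(G_B)$ by splitting the product along a closed walk into its $\Phi_L$ and $\Phi_R$ contributions with Theorem~\ref{thm:projectcycles}. You are also slightly more careful than the paper, which implicitly assumes connectivity and treats the $\Phi_R$ contribution as lying in $\Gamma(G_B)$ without comment; your observations that $\Gamma(G_R)=\Gamma(G_B)^{-1}=\Gamma(G_B)$ and that $G_B$ must be connected for $\Gamma$ to be a group fill exactly those implicit steps.
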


\begin{proof}
    Since every cycle in $G_B$ lifts to a cycle in $\GX$ with the same permutation product, it is clear that $\Gamma(G_B)\subseteq \Gamma(\GX)$. 
    
    On the other hand, for any closed walk $\gamma$ in $\GX$, its permutation product can be written as the product of two permutations in $\Gamma(G_B)$ by Theorem~\ref{thm:projectcycles}, namely the permutation product associated to $\Phi_L(\gamma)$ and $\Phi_R(\gamma)$. So every element in $\Gamma(\GX)$ is the product of two elements in $\Gamma(G_B)$, and is therefore itself in the group $\Gamma(G_B)$. This proves that $\Gamma(G_B)\subseteq \Gamma(\GX)\subseteq\Gamma(G_B)$, so equality holds.
\end{proof}

\begin{theorem}
The graph $\tilde{G}_{B_{\mathsf{X}}}=G_{H_{\mathsf{X}}}$ is connected if and only if $\tilde{G}_B$ is.
\end{theorem}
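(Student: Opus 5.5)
The plan is to reduce the statement to Theorem~\ref{thm:gross_tucker2} applied twice, once to $G_B$ and once to $\GX$, and then to compare the two subgroups of $\mathbf{C}_r$ that appear there using Corollary~\ref{cor:subgroups}. Theorem~\ref{thm:gross_tucker2} requires the protograph to be connected, so first I will check that $\GX$ is connected whenever $G_B$ is.

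\emph{Connectivity of the base graphs.} Recall that $\GX$ is built from the copies $G_1,\dots,G_n$ and $G'_1,\dots,G'_m$ of $G_B$, and that $G_i$ and $G'_j$ share the check node $c_{i,j}$. The intersection pattern of these subgraphs is therefore a complete bipartite graph $K_{n,m}$, which is connected. If $G_B$ is connected, every copy is connected, and hence $\GX$ is connected.

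If $G_B$ is disconnected, then $\tilde G_B$ is disconnected, since $\pi$ maps each component of $\tilde G_B$ onto a connected subgraph of $G_B$. I would then argue that $\GX$ is also disconnected, which makes $\tilde{G}_{B_{\mathsf X}}$ disconnected. A possible witness: take a component $K$ of $G_B$ and consider the vertices of $\GX$ that $\Phi_L$ maps into $K$ or whose $\Phi_R$-image lies in a corresponding piece. I expect the cleanest way to handle this direction is to note that the statement is implicitly about connected $G_B$. If that reading is not allowed, I will instead check directly that the edges of $G_i$ and $G'_j$ never link a component built from $K$ to one built from its complement. This degenerate case is the part I am least sure of, and I would settle it first with a small example.

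\emph{The main argument.} Assume $G_B$, and hence $\GX$, is connected. By Theorem~\ref{thm:gross_tucker2}, the number of components of $\tilde{G}_B$ is the index $[\mathbf{C}_r : \Gamma(G_B)]$. Likewise, the number of components of $\tilde{G}_{B_{\mathsf X}}$ is $[\mathbf{C}_r : \Gamma(\GX)]$. Here I use the remark preceding Corollary~\ref{cor:subgroups}: the group generated by the fundamental-cycle products equals the group of all closed-walk products, $\Gamma(\cdot)=\mathbf{C}_r(\cdot)$.

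Corollary~\ref{cor:subgroups} gives $\Gamma(\GX)=\Gamma(G_B)$. The two indices are therefore equal, so the two lifts have the same number of connected components. This is stronger than the claim: in particular, one lift is connected if and only if the other is.

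The only delicate point in this step is that Corollary~\ref{cor:subgroups} relies on Theorem~\ref{thm:projectcycles}. That theorem requires the images $\Phi_L(\gamma)$ and $\Phi_R(\gamma)$ to be closed walks in graphs whose closed-walk group equals $\Gamma(G_B)$. This holds because the added loops are labeled by the identity $\iota$, and because relabeling every edge by its inverse in $G_R$ only inverts each closed-walk product, which leaves the subgroup unchanged.
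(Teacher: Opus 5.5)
Your proposal is correct and is essentially the paper's proof: apply Theorem~\ref{thm:gross_tucker2} to both $G_B$ and $\GX$ and use $\Gamma(\GX)=\Gamma(G_B)$ from Corollary~\ref{cor:subgroups}, so the two lifts have the same number of components. Your check that $\GX$ is connected, via the $K_{n,m}$ intersection pattern of the copies, supplies a hypothesis of Theorem~\ref{thm:gross_tucker2} that the paper leaves implicit. The degenerate case you worry about is harmless: it cannot occur when all entries of $B$ lie in $\mathbf{C}_r$ (then $G_B\cong K_{m,n}$). If zero entries are allowed, view each vertex of $\GX$ as a pair of vertices of $G_B$, where every edge moves one coordinate along an edge of $G_B$; the pair of components containing the two coordinates is then invariant, so $\GX$ and its lift are disconnected together with $\tilde G_B$.
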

    \begin{proof}
    This follows directly from Corollary~\ref{cor:subgroups} and Theorem~\ref{thm:gross_tucker2}, as both graphs are connected if and only if $\Gamma(B) = \Gamma(\GX)$ is a transitive subgroup of $S_r$.
\end{proof}

\begin{theorem}\label{mxn_c_r_connectivity}
Let $B$ be an $m \times n$ base matrix $B$ over $\mathbf{C}_r$. Suppose $G_B$ contains a $4$ cycle whose net permutation product is $P^\delta$, for $$\delta= \epsilon_{i_1j_1}-\epsilon_{i_1j_2} +\epsilon_{i_2j_2}-\epsilon_{i_2j_1}.$$ If $\gcd(\delta, r)=1$, then $G_{H_{\mathsf{X}}}$ is connected.
\end{theorem}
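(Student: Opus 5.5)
The plan is to reduce the statement to the preceding theorem: $G_{H_{\mathsf{X}}}=\tilde{G}_{B_{\mathsf{X}}}$ is connected if and only if $\tilde{G}_B$ is connected. It therefore suffices to show that $\tilde{G}_B$ is connected. By Theorem~\ref{thm:gross_tucker2} (together with the discussion of $\Gamma(G_B)$ preceding Corollary~\ref{cor:subgroups}), this holds exactly when the subgroup $\Gamma(G_B)\le \mathbf{C}_r$ of permutation products of closed walks in $G_B$ is all of $\mathbf{C}_r$. This requires $G_B$ itself to be connected, which I address below.

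First I would record that the given $4$-cycle $\gamma=v_{j_1}\sim c_{i_1}\sim v_{j_2}\sim c_{i_2}\sim v_{j_1}$ is a closed walk in $G_B$. Using the orientation convention (variable-to-check contributes $b_{ij}$, check-to-variable contributes $b_{ij}^{-1}$), its permutation product is $P^{\epsilon_{i_1j_1}-\epsilon_{i_1j_2}+\epsilon_{i_2j_2}-\epsilon_{i_2j_1}}=P^\delta$, so $P^\delta\in\Gamma(G_B)$. Since $\gcd(\delta,r)=1$, the order of $P^\delta$ in $\mathbf{C}_r$ is $r$, exactly as in the argument for the $2\times 2$ lemma. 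Hence $\langle P^\delta\rangle=\mathbf{C}_r$, and so $\Gamma(G_B)=\mathbf{C}_r$.

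The main obstacle is connectivity of the protograph $G_B$ itself, since Theorem~\ref{thm:gross_tucker2} and the definition of $\Gamma$ assume a connected graph. If $B$ has no zero entries (all entries lie in $\mathbf{C}_r$, as the statement suggests), then $G_B=K_{m,n}$ is connected and the argument above finishes the proof. If zero entries were allowed, the natural fix is to require $G_B$ to be connected; a cycle in one component cannot connect other components of the lift. One could then also argue directly on $G_{B_{\mathsf{X}}}$: the $4$-cycle in any copy $G_i$ yields $P^\delta\in\Gamma(G_{B_{\mathsf{X}}})$, and connectivity of $G_{B_{\mathsf{X}}}$ follows from the complete bipartite intersection pattern of the copies $G_i$ and $G'_j$. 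With that in hand, Theorem~\ref{thm:gross_tucker2} gives a single component.
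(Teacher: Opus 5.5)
Your argument is correct, but it follows a different route from the paper.

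The paper works directly in $G_{B_{\mathsf{X}}}$. After repositioning the given $4$-cycle, it writes down an explicit spanning tree $T$: the first rows of the left $B$-blocks, the first column of the first block, and a selection of right-hand edges. It counts the $mn+n^2+m^2-1$ edges of $T$ and exhibits a non-tree edge whose fundamental cycle is exactly the given $4$-cycle. Thus $P^\delta$ is among the fundamental-cycle products, and Theorem~\ref{thm:gross_tucker2} applies.

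You instead work with the basis-free group $\Gamma$ of all closed-walk products, which equals the group generated by the fundamental-cycle products for \emph{any} spanning tree. Then it is enough that the $4$-cycle is some closed walk. No tree has to be constructed, no check is needed that the other three edges of the $4$-cycle lie in $T$, and no repositioning is required.

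Your direct variant is the lightest of all. Since the $4$-cycle appears with identical labels inside a left copy $G_i$, you only need the trivial inclusion $\Gamma(G_B)\subseteq\Gamma(\GX)$ plus connectivity of $\GX$. You do not need the full equality of Corollary~\ref{cor:subgroups}, and hence Theorem~\ref{thm:projectcycles}, which underlie the reduction to $\tilde G_B$.

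What the paper's explicit construction buys is a concrete fundamental-cycle basis for $G_{B_{\mathsf{X}}}$, giving control over \emph{all} fundamental products. That is what one needs to decide connectivity exactly, as in Example~\ref{2x2_H_X_connectivity}, or to design $B$ as in Section~\ref{section:constructing_B}. For the sufficient condition stated here, your argument is cleaner.

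Your remark about zero entries is harmless but unnecessary, since entries in $\mathbf{C}_r$ force $G_B=K_{m,n}$.
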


\begin{proof}
Rearrange the columns and rows of $B_{\mathsf{X}}$ as to form copies of $B$ (such arrangement is visualised in the matrix \ref{B_X_2x3_spanning}) and assume the given $4$ cycle arises from check nodes $c_{1,1}, c_{1,2}, v_{1,1},$ and $v_{1,2}'$. Referring to this perspective, we construct a spanning tree for $G_{B_{\mathsf{X}}}$ by taking the first row of every copy of $B$ to include all left variable nodes. That is,
\[
\{c_{1,j}\sim v_{k,j}: j,k\in[n]\}
\cup
\{c_{i,1}\sim v_{1,1}: i\in[m]\}.
\]

On the right hand side, take 
\[
c_{i,j}\sim v'_{i,1}: i\in[m],\,j\in[n]\}
\cup
\{c_{i,1}\sim v'_{i,\ell}: i,\ell\in[m]\}.
\]

We visualize this spanning tree for the $2 \times 3$ base matrix $B$ in \eqref{B_X_2x3_spanning}, with the entries of $B_{\mathsf{X}}$ corresponding to edges of the spanning tree denoted in bold. 

\begin{equation}\label{B_X_2x3_spanning}
 \setcounter{MaxMatrixCols}{20}
\setlength{\arraycolsep}{4pt}
 \begin{bmatrix}
 \bf{b_{1}} & \bf{b_{2}} & \bf{b_{3}} & 0 & 0 & 0 & 0 & 0 & 0 & \bf{b_{1}^{-1}} & \bf{b_{4}^{-1}} & 0 & 0 \\
 \bf{b_{4}} & b_{5} & b_{6} & 0 & 0 & 0 & 0 & 0 & 0 & 0 & 0 & \bf{b_{1}^{-1}} & \bf{b_{4}^{-1}} \\
 0 & 0 & 0 & \bf{b_{1}} & \bf{b_{2}} & \bf{b_{3}} & 0 & 0 & 0 & \bf{b_{2}^{-1}} & b_{5}^{-1} & 0 & 0 \\
 0 & 0 & 0 & b_{4} & b_{5} & b_{6} & 0 & 0 & 0 & 0 & 0 & \bf{b_{2}^{-1}} & b_{5}^{-1} \\
 0 & 0 & 0 & 0 & 0 & 0 & \bf{b_{1}} & \bf{b_{2}} & \bf{b_{3}} & \bf{b_{3}^{-1}} & b_{6}^{-1} & 0 & 0 \\
 0 & 0 & 0 & 0 & 0 & 0 & b_{4} & b_{5} & b_{6} & 0 & 0 & \bf{b_{3}^{-1}} & b_{6}^{-1}
 \end{bmatrix}
\end{equation}

In total, we selected $mn+n^2+m^2-1$ edges. The first row of each left $B$ block attaches all left variables to the first check row,
the first column of the first left $B$ block connects the $m$ check rows together, and the right-hand choices attach the right variables without creating cycles.

Consider the edge $e=c_{1,2}\sim v_{1,2} \notin T$. $T \cup e$ contains the fundamental cycle $v_{1,1} \sim c_{1,1} \sim v_{1,2} \sim c_{1,2} \sim v_{1,1}$ with net permutation product $\delta$, so the net permutation products of the fundamental cycles generates $\mathbf{C}_r$. By Theorem $\ref{thm:gross_tucker2}$, $G_{H_{\mathsf{X}}}$ is connected.
\end{proof}

\begin{example}\label{2x2_H_X_connectivity}
The graph $G_{H_{\mathsf{X}}}$ arising from $2 \times 2$ $B$ over $C_2$ is disconnected if and only if $\beta=0$, for $\beta$ as defined in \eqref{eq:beta}. To see this, we choose a particular spanning tree of $G_{B_X}$, highlighted in Figure~\ref{fig:B_X}, and use the enumeration of vertices as given in Section \ref{section:preliminaries}.

\begin{figure}[ht!]
    \centering
    \begin{tikzpicture}[thick,scale=.95,
    var/.style={circle,draw=black,minimum size=6mm},
    chk/.style={rectangle,draw=black,minimum size=6mm},
    elab/.style={font=\tiny,fill=white,inner sep=1pt,sloped}
]

\node[var,label={left:\small{$v^L_{1,1}$}}] (x11) at (-3.5, 2) {};
\node[var,label={left:\small{$v^L_{2,1}$}}] (x21) at (-3.5, 0.7) {};
\node[var,label={left:\small{$v^L_{1,2}$}}] (x12) at (-3.5,-0.7) {};
\node[var,label={left:\small{$v^L_{2,2}$}}] (x22) at (-3.5,-2) {};

\node[chk] (c11) at (0, 2) {$c_{1,1}$};
\node[chk] (c21) at (0, 0.7) {$c_{2,1}$};
\node[chk] (c12) at (0,-0.7) {$c_{1,2}$};
\node[chk] (c22) at (0,-2) {$c_{2,2}$};

\node[var,label={right:\small{$v^R_{1,1}$}}] (y11) at (3.5, 2) {};
\node[var,label={right:\small{$v^R_{1,2}$}}] (y12) at (3.5, 0.7) {};
\node[var,label={right:\small{$v^R_{2,1}$}}] (y21) at (3.5,-0.7) {};
\node[var,label={right:\small{$v^R_{2,2}$}}] (y22) at (3.5,-2) {};

\path[-,line width=2pt] (x11) edge node[pos=.18,above,elab] {$P^{\varepsilon_{11}}$} (c11);
\path[-,line width=2pt] (x21) edge node[pos=.18,above,elab] {$P^{\varepsilon_{12}}$} (c11);

\path[-] (x11) edge node[pos=.25,above,elab] {$P^{\varepsilon_{21}}$} (c21);
\path[-,line width=2pt] (x21) edge node[pos=.25,below,elab] {$P^{\varepsilon_{22}}$} (c21);

\path[-,line width=2pt] (x12) edge node[pos=.18,above,elab] {$P^{\varepsilon_{11}}$} (c12);
\path[-,line width=2pt] (x22) edge node[pos=.18,below,elab] {$P^{\varepsilon_{12}}$} (c12);

\path[-] (x12) edge node[pos=.25,above,elab] {$P^{\varepsilon_{21}}$} (c22);
\path[-] (x22) edge node[pos=.25,below,elab] {$P^{\varepsilon_{22}}$} (c22);

\path[-,line width=2pt] (c11) edge node[pos=.82,above,elab] {$P^{-\varepsilon_{11}}$} (y11);
\path[-,line width=2pt] (c11) edge node[pos=.85,below,elab] {$P^{-\varepsilon_{21}}$} (y12);

\path[-,line width=2pt] (c12) edge node[pos=.85,above,elab] {$P^{-\varepsilon_{12}}$} (y11);
\path[-] (c12) edge node[pos=.82,below,elab] {$P^{-\varepsilon_{22}}$} (y12);

\path[-,line width=2pt] (c21) edge node[pos=.82,above,elab] {$P^{-\varepsilon_{11}}$} (y21);
\path[-,line width=2pt] (c21) edge node[pos=.85,below,elab] {$P^{-\varepsilon_{21}}$} (y22);

\path[-] (c22) edge node[pos=.75,above,elab] {$P^{-\varepsilon_{12}}$} (y21);
\path[-,line width=2pt] (c22) edge node[pos=.82,below,elab] {$P^{-\varepsilon_{22}}$} (y22);

\end{tikzpicture}
    \caption{The graph $G_{B_{\mathsf{X}}}$ for a $2\times 2$ base graph, with a spanning tree highlighted.}
    \label{fig:B_X}
\end{figure}

Consider the tree $T$ with vertex set $V(T)=V(G_{B_X})$ and edge set indicated in Figure~\ref{fig:B_X}.

Since $T$ has 11 edges, $G_{B_{\mathsf{X}}}$ has five fundamental cycles. We enumerate them as follows, where the suppressed parts of the cycles are the unique path from $c_{1,2}$ to $c_{2,2}$ in the tree.
\begin{equation}\label{eq:2x2_fundamental_cycles}
\begin{aligned}
\gamma_1: & c_{2,1} \sim v_{2,1}^L \sim c_{1,1}\sim v_{1,1}^L\sim c_{2,1}\\
\gamma_2: &  c_{1,2} \sim v_{1,1}^R \sim c_{1,1}\sim v_{1,2}^R\sim c_{1,2}\\
\gamma_3: & c_{2,2} \sim v_{2,1}^R \sim c_{2,1}\sim v_{2,2}^R \sim c_{2,2}\\
\gamma_4: & c_{2,2}\sim v_{1,2}^L \sim c_{1,2} \sim\dots\sim c_{2,2}\\
\gamma_5: & c_{2,2}\sim v_{2,2}^L \sim c_{1,2} \sim\dots\sim c_{2,2}\\
\end{aligned}
\end{equation}

The net permutation products of $\gamma_1, \gamma_2$, $\gamma_3$, and $\gamma_5$ is precisely $\delta$, and the net permutation product of $\gamma_4$ is $0$. Thus, the net permutation products of these fundamental cycles generate $C_2$ if and only if $\beta =1$, implying $G_{H_{\mathsf{X}}}$ is disconnected if and only if $\beta =0$.  
\end{example}

\section{Absorbing Sets}\label{section:girth_absorbing}

We now analyze additional graph properties of $G_{B_{\mathsf{X}}}$ and $G_{H_{\mathsf{X}}}$. For the absorbing set analysis, even for graphs $G_B$ and $G_{B_{\mathsf{X}}}$ that have labeled edges, we just consider the traditional absorbing set definition as stated in Definition \ref{def:absorbing}, since we are concerned with how these absorbing sets lift to absorbing sets for $G_{H_{\mathsf{X}}}$. Moreover, we exclusively consider absorbing sets whose induced graphs are connected, as these include all inclusion-minimal absorbing sets.

We first consider a $2 \times 2$ base matrix $B$. Then the lifted graph $G_{H_{\mathsf{X}}}$ has variable nodes all of degree $2$. For the parity condition in the absorbing set definition to be satisfied, every variable node in the absorbing set must be incident to only even degree check nodes in the induced subgraph. That is, for any absorbing set $\mathcal{A}$ for $G_{H_{\mathsf{X}}}$, all check nodes in the induced subgraph $G_A$ must be of even degree, leading to the following remark.

\begin{remark}
For a $2 \times 2$ base matrix $B$ over $\mathbf{C}_r$, all absorbing sets of $G_{H_{\mathsf{X}}}$ are $(a,0)$ absorbing sets for $a \geq \frac{g(G_{H_{\mathsf{X}}})}{2}$. 
\end{remark}

\begin{theorem}\label{thm:2x2_girth} For a $2 \times 2$ base matrix $B$ over $C_2$ such that $G_{H_{\mathsf{X}}}$ is connected, $G_{H_{\mathsf{X}}}$ has girth $8$. 
\end{theorem}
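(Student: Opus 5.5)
The plan is to exploit the fact that for a $2\times 2$ base matrix every variable node of $G_{B_{\mathsf{X}}}$, and hence of $G_{H_{\mathsf{X}}}$, has degree $2$. Each such Tanner graph is then the subdivision of a multigraph on its check nodes: every variable node becomes an edge joining its two check neighbours. A cycle of length $2k$ in the Tanner graph corresponds exactly to a cycle of length $k$ in this check multigraph, where two parallel edges count as a cycle of length $2$. So it suffices to show two things: the check multigraph $M$ of $G_{H_{\mathsf{X}}}$ has no cycles of length $2$ or $3$, and it does have a cycle of length $4$.

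First I describe the check multigraph $M_0$ of $G_{B_{\mathsf{X}}}$, reading it off from Figure~\ref{fig:B_X}.
\begin{itemize}
\item The left variables $v^L_{1,1},v^L_{2,1}$ both join $c_{1,1}$ and $c_{2,1}$.
\item The left variables $v^L_{1,2},v^L_{2,2}$ both join $c_{1,2}$ and $c_{2,2}$.
\item The right variables $v^R_{1,1},v^R_{1,2}$ both join $c_{1,1}$ and $c_{1,2}$.
\item The right variables $v^R_{2,1},v^R_{2,2}$ both join $c_{2,1}$ and $c_{2,2}$.
\end{itemize}
Thus $M_0$ is a $4$-cycle $c_{1,1}c_{2,1}c_{2,2}c_{1,2}$ with every edge doubled. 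Consequently the $4$-cycles of $G_{B_{\mathsf{X}}}$ are exactly the four ``digons'' formed by a pair of parallel edges. Each digon is a copy of the $4$-cycle $G_B$ inside some $G_i$ or $G'_j$. The permutation product around a digon is $P^{\pm\beta}$, where $\beta$ is as in \eqref{eq:beta}: on the right-hand copies the labels are inverted and transposed, which only changes the sign. This is the computation for $\gamma_1,\gamma_2,\gamma_3$ in Example~\ref{2x2_H_X_connectivity}, and I would redo it explicitly for all four digons. By Example~\ref{2x2_H_X_connectivity}, connectivity of $G_{H_{\mathsf{X}}}$ over $\mathbf{C}_2$ forces $\beta=1$, so each digon product $P^{\pm 1}$ has order $2$.

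Next I use the covering map $\pi: G_{H_{\mathsf{X}}}\to G_{B_{\mathsf{X}}}$. A $4$-cycle in $G_{H_{\mathsf{X}}}$ projects to a non-backtracking closed walk of length $4$ in $G_{B_{\mathsf{X}}}$. Since $G_{B_{\mathsf{X}}}$ is a simple bipartite graph, such a walk must be one of its $4$-cycles, that is, one of the digons. By Theorem~\ref{thm:gross_tucker1} with order $m=2$, each digon lifts to a single $8$-cycle and to no $4$-cycles. Hence $G_{H_{\mathsf{X}}}$ has no $4$-cycles, and in particular these lifted digons exhibit an $8$-cycle, so $g(G_{H_{\mathsf{X}}})\le 8$.

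It remains to rule out $6$-cycles, which correspond to triangles in $M$. The multigraph $M$ is a $2$-fold cover of $M_0$, and $M_0$ is bipartite because its underlying cycle $c_{1,1}c_{2,1}c_{2,2}c_{1,2}$ has even length. A cover of a bipartite graph is bipartite: pull back the $2$-colouring along $\pi$. So $M$ has no odd cycles, hence no triangles, and $G_{H_{\mathsf{X}}}$ has no $6$-cycles. Combining this with the previous step gives girth exactly $8$.

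I expect the main obstacle to be the bookkeeping in the digon step. One must check that all four digons, including the right-hand ones with inverted labels, really have product $P^{\pm\beta}$ and not some other combination of the $\varepsilon_{ij}$. One must also make sure the orientation conventions (variable-to-check positive) are applied consistently, and I would verify this directly against the edge labels in Figure~\ref{fig:B_X}. The claim that a lifted cycle projects to a non-backtracking walk is standard for covering graphs and needs only a one-line justification: consecutive edges at a lifted vertex are distinct and $\pi$ is injective on the edges at each vertex.
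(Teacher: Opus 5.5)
Your proof is correct and follows essentially the same route as the paper. In both, the $4$-cycles of $G_{B_{\mathsf{X}}}$ are exactly the four copies of $G_B$ and $G_{B^*}$, each with product $P^{\pm\beta}$; connectivity forces this product to have order $2$, so these copies lift only to $8$-cycles, and $6$-cycles are excluded separately. Your version is more careful at three points: the non-backtracking projection of lifted $4$-cycles, the bipartite-cover argument on the check multigraph (which makes rigorous the paper's terse ``no triangles'' step for excluding $6$-cycles), and the explicit $8$-cycle that gives $g\le 8$.
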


\begin{proof}
By Theorem \ref{2x2_H_X_connectivity}, $\delta$ defined in \eqref{eq:beta} must be nonzero. To do this, we show that every $4$ cycle in $G_{B_{\mathsf{X}}}$ has nontrivial permutation product.
It follows that each $4$ cycle arising from an isomorphic subgraph of $G_{B}$ or $G_{B^*}$ has nontrivial permutation product, hence, lifts to $8$-cycles. 
Now every two copies of $G_B$ intersect in at most one point, so there are no 4-cycles that are not contained in any of the copies of $G_B$.

Moreover, the base graph $G_{B_{\mathsf{X}}}$ contains no triangles, hence there are no $6$ cycles in $G_{H_{\mathsf{X}}}$. Therefore, $g(G_{H_{\mathsf{X}}})=8$. 
\end{proof}

\begin{corollary}
The minimal absorbing sets for $G_{H_{\mathsf{X}}}$ as in Theorem \ref{thm:2x2_girth} are of the form $(8,0)$. There are $12$ $8$ cycles for $G_{H_{\mathsf{X}}}$ for all permutation choices such that $G_{H_{\mathsf{X}}}$ is connected. 
\end{corollary}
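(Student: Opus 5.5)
The plan is to recast the absorbing-set question as a cycle question, and then count the relevant cycles by lifting closed walks from $G_{B_{\mathsf{X}}}$ with Theorem~\ref{thm:gross_tucker1}.

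\textbf{Step 1: absorbing sets are cycle-space elements.} Every variable node of $G_{H_{\mathsf{X}}}$ has degree $2$. By the preceding Remark, an absorbing set $\mathcal{A}$ is therefore a set of variable nodes whose induced check degrees are all even. Form the auxiliary multigraph $\Delta$ whose vertices are the check nodes of $G_{H_{\mathsf{X}}}$, with one edge per variable node joining its two neighbouring checks. Under this translation, absorbing sets are exactly the nonempty even subgraphs of $\Delta$. Connected inclusion-minimal ones are exactly the cycles of $\Delta$, and these correspond to cycles of $G_{H_{\mathsf{X}}}$ of twice the length. Theorem~\ref{thm:2x2_girth} gives girth $8$, so the minimal absorbing sets are the variable sets of $8$-cycles. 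Such a set has $4$ variable nodes, spans $8$ nodes in total, and has $0$ odd checks. I will state explicitly that the ``$(8,0)$'' label counts all nodes of the $8$-cycle, since in the notation of Definition~\ref{def:absorbing} this is a $(4,0)$ absorbing set. It then remains to count the $8$-cycles.

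\textbf{Step 2: describe the base multigraph.} Let $\Delta_0$ be the analogous multigraph for $G_{B_{\mathsf{X}}}$, read off from Figure~\ref{fig:B_X}. It is a square on $c_{1,1},c_{2,1},c_{2,2},c_{1,2}$ in which every side is doubled. The two $c_{1,1}c_{2,1}$ edges come from $v^L_{1,1},v^L_{2,1}$. The two $c_{1,2}c_{2,2}$ edges come from $v^L_{1,2},v^L_{2,2}$. The two $c_{1,1}c_{1,2}$ edges and the two $c_{2,1}c_{2,2}$ edges come from the right variables. The lift $\Delta$ is then the $2$-lift of $\Delta_0$ induced by the labels.

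\textbf{Step 3: count $4$-cycles of $\Delta$ via their projections.} A $4$-cycle in $\Delta$ projects to a closed non-backtracking walk of length $4$, $2$, or $1$ in $\Delta_0$ whose lift closes up. I split into two cases.

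\emph{(a) Digons.} Each digon of $\Delta_0$ is a $4$-cycle of $G_{B_{\mathsf{X}}}$ lying in one copy of $G_B$ or $G_{B^*}$. Its product is $P^{\pm\beta}$, which is nontrivial since $\beta=1$ by Example~\ref{2x2_H_X_connectivity}. By Theorem~\ref{thm:gross_tucker1}, each of the $4$ digons lifts to exactly one $4$-cycle of $\Delta$.

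\emph{(b) Squares.} Choosing one parallel edge on each side of the square gives $16$ base $4$-cycles. Swapping the chosen edge on one side multiplies the product by $P^{\pm\beta}$. Hence the product is trivial for exactly half of the choices, and each such trivial cycle lifts to two $4$-cycles of $\Delta$. The remaining choices lift to $8$-cycles of $\Delta$, which are irrelevant here.

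Finally, walks traversing a digon twice lift to cycles already counted in (a).

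\textbf{Main obstacle.} The hard part is reconciling this tally with the claimed $12$ and showing it is independent of the permutation choice. Naively, (a) and (b) give $4+2\cdot 8$. So I would first recheck which square choices genuinely produce trivial products, tracking the orientation signs of $P^{\varepsilon}$ versus $P^{-\varepsilon}$ on the right-hand edges and using $\beta\equiv 1\pmod 2$. I would also verify directly on the explicit $8\times16$ matrix $H_{\mathsf{X}}$ that no lifted square is double counted. Independence from the permutation choice should follow because, over $\mathbf{C}_2$, every count above depends only on $\beta$ and on the parity of the square products, and both are fixed once connectivity is imposed.
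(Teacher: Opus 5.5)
Your reduction to the check-node multigraph $\Delta$ is sound, but the case split in Step~3 is incomplete. The projection of a $4$-cycle of $\Delta$ is a closed walk of length $4$ in $\Delta_0$. That walk need not be a repeated traversal of a single base cycle, and repeated traversals are all Theorem~\ref{thm:gross_tucker1} describes. You miss the walks that go once around each of two \emph{different} digons meeting at a vertex, e.g.\ $c_{1,1}\to c_{2,1}\to c_{1,1}\to c_{1,2}\to c_{1,1}$ using both parallel edges on each side. Over $\mathbf{C}_2$ its product is $P^{2\beta}=\iota$. Its lift from either point over $c_{1,1}$ passes through the other point over $c_{1,1}$ before returning, so it is a genuine $4$-cycle of $\Delta$. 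There are $4$ choices of the shared vertex and $2\cdot 2$ choices of the points over its two neighbours, so this adds $16$ cycles. The full tally is $4+16+16=36$, not $20$.

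So your ``main obstacle'' is not a sign-bookkeeping issue: the count $12$ in the statement is false. On each doubled side of the square, the two parallel edges carry mod-$2$ voltages summing to $\beta$ (e.g.\ $\varepsilon_{11}+\varepsilon_{21}$ and $\varepsilon_{12}+\varepsilon_{22}$ on the side $c_{1,1}c_{2,1}$). For odd $\beta$ these are $0$ and $1$, so each side lifts to a complete $K_{2,2}$ between fibres. Hence $\Delta\cong K_{4,4}$, with parts the fibres over $\{c_{1,1},c_{2,2}\}$ and over $\{c_{2,1},c_{1,2}\}$, for every connected choice of $B$; you can check this by hand with $\varepsilon_{22}=1$ and all other $\varepsilon_{ij}=0$. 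Thus $G_{H_{\mathsf{X}}}$ is the subdivision of $K_{4,4}$. Girth $8$ and independence from the permutations follow at once, but there are $\binom{4}{2}^2=36$ eight-cycles.

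Step~1 also conflates inclusion-minimal with minimum size. Every cycle of $\Delta$ is an inclusion-minimal even subgraph. So the inclusion-minimal absorbing sets are the $36$ four-cycles, $96$ six-cycles and $72$ Hamiltonian cycles of $K_{4,4}$, of types $(4,0)$, $(6,0)$, $(8,0)$ in the sense of Definition~\ref{def:absorbing}. Girth $8$ only identifies the smallest of these. Reinterpreting ``$(8,0)$'' as counting all nodes of an $8$-cycle changes the claim rather than proving it. What your method actually yields is that the minimum absorbing sets are $36$ sets of type $(4,0)$, for every connected choice of $B$.
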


To characterize the absorbing sets of $H_{\mathsf{X}}$ for a general $m \times n$ base matrix $B$, particularly the minimal absorbing sets, we first describe the the absorbing sets of $G_{B_{\mathsf{X}}}$ and then describe how they lift to $G_{H_{\mathsf{X}}}$. 

We divide the absorbing sets of $G_{B_{\mathsf{X}}}$ into three categories: those of variable nodes corresponding to columns strictly from $B \otimes I_n$, which we will call ``left type'' absorbing sets, those of variable nodes corresponding to columns strictly from $I_m \otimes B^{*}$, which we will call ``right type'', and those of variable nodes corresponding to columns from both $B \otimes I_n$ and $I_m \otimes B^*$, which we will call ``mixed type.'' For a set $\mathcal{A}$ to be a minimal ``mixed type'' absorbing sets, we require that $\mathcal{A}$ does not contain a minimal ``left type'' or ``right type'' absorbing set.

\begin{theorem}\label{thm:min_left_right_absorbing_set}
Consider an $m \times n$ matrix $B$ with entries over $\mathbf{C}_r$. The ``left type'' absorbing sets of $G_{B_{\mathsf{X}}}$ are of the form $(2k, 0)$ for $1 \leq k \leq \lfloor \frac{n}{2}\rfloor$, of which there are $n\binom{n}{2k}$ such absorbing sets. The ``right type'' absorbing sets are of the form $(2\ell,0)$ for $1 \leq \ell \leq \lfloor \frac{m}{2}\rfloor$ of which there are $m\binom{m}{2\ell}$ such absorbing sets. The minimal ``left'' and ``right'' type absorbing sets are thus of the form $(2,0)$. Setting $p:= \lfloor \frac{m}{2} \rfloor +1$ and $q:= \lfloor \frac{n}{2}\rfloor +1$, the minimal ``mixed type'' absorbing sets are of the form 
\begin{equation}\label{eq:mixed_type}
\left( 
p+q\ ,\ p(\lceil \frac{n}{2}\rceil-1)+q\left(\lceil \frac{m}{2} \rceil-1\right)
\right),
\end{equation} 

\normalsize
of which there are 
\begin{equation}\label{eq:mixed_count}
n^{q}\binom{n}{q}m^{p}\binom{m}{p}
\end{equation}

such minimal ``mixed type'' absorbing sets.
\end{theorem}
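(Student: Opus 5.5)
The plan is to work directly in the protograph $G_{B_{\mathsf X}}$, assuming (as the counts implicitly do) that every entry of $B$ is a nonzero element of $\mathbf{C}_r$, so that $G_B\cong K_{m,n}$. The first step is to record the local structure from the notation fixed in Section~\ref{section:preliminaries}.

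\begin{itemize}
\item Each left copy $G_j$ is a complete bipartite graph. Its $n$ variable nodes $v_{\cdot,j}$ are all adjacent to the same $m$ check nodes.
\item Each right copy $G'_k$ is also complete bipartite. Its $m$ variable nodes are all adjacent to the same $n$ check nodes, one in each left copy.
\item Every check node lies in exactly one left copy and exactly one right copy. This is the complete $(n,m)$ intersection pattern noted before Theorem~\ref{thm:graph_iso}.
\end{itemize}

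Hence left variable nodes have degree $m$ and right variable nodes have degree $n$. We only consider absorbing sets whose induced subgraph is connected.

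\textbf{Left and right types.} Left variables from different left copies share no check node. So a connected left-type set $\mathcal A$ lies inside a single copy $G_j$. If $|\mathcal A|=s$, each of the $m$ checks of $G_j$ has induced degree exactly $s$. For odd $s$ every check is odd, which violates the majority condition. For even $s=2k$ every check is even, giving a $(2k,0)$ absorbing set. Choosing the copy and the subset gives $n\binom{n}{2k}$ such sets, with $1\le k\le\lfloor n/2\rfloor$. The right type is identical with $m$ and $n$ exchanged, and the minimal sets in both cases are $(2,0)$.

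\textbf{Mixed type: reduction.} Let $\mathcal A$ be a minimal mixed-type set. It may not contain a $(2,0)$ left- or right-type set, so it contains at most one variable from each left copy and at most one from each right copy. Let $q'$ be the number of left copies used and $p'$ the number of right copies used. Then check $c$ in left copy $j$ and right copy $k$ has induced degree equal to the number of chosen variables among these two copies, so it is $0$, $1$ or $2$. Its degree is even and positive exactly when both copies are used.

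\textbf{Mixed type: counting parameters.} A chosen left variable in copy $j$ therefore has exactly $p'$ even neighbours and $m-p'$ odd ones. The majority condition forces $p'>m/2$, i.e.\ $p'\ge p$. Symmetrically, the condition at right variables gives $q'\ge q$. Conversely, any choice with $p'\ge p$ and $q'\ge q$ is an absorbing set, and it is connected because each used left copy meets each used right copy in a degree-$2$ check. Removing any variable from a set with $p'=p$, $q'=q$ drops a count below threshold. So the minimal sets are exactly those with $p'=p$ and $q'=q$, and they have size $p+q$. The odd checks are those seen by exactly one chosen copy, so
\[
b = q(m-p)+p(n-q) = q\left(\lceil \tfrac m2\rceil-1\right)+p\left(\lceil \tfrac n2\rceil-1\right),
\]
using $m-p=\lceil m/2\rceil-1$. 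For the count, choose the $q$ left copies in $\binom nq$ ways and one variable in each in $n^q$ ways; the right side similarly gives $\binom mp m^p$, yielding \eqref{eq:mixed_count}.

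\textbf{Main obstacle.} The delicate step is justifying the reduction to at most one variable per copy: the ``minimal mixed type'' notion must be read as excluding any contained left or right $(2,0)$ set. One must also be careful that minimality of a mixed set means minimality among mixed sets only, so that only the $p'=p$, $q'=q$ configurations survive.
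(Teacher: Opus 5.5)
Your proof is correct and follows the same approach as the paper. Left- and right-type sets live in a single complete bipartite copy of $G_B$ (resp.\ $G_{B^*}$), and a minimal mixed-type set uses at most one variable per copy, so the majority condition forces more than $m/2$ right copies and more than $n/2$ left copies. Your observation that each check's induced degree equals the number of used copies through it gives an exact characterization of all minimal mixed sets, which directly justifies the count $n^{q}\binom{n}{q}m^{p}\binom{m}{p}$ and the value of $b$; the paper only constructs such sets and proves the lower bounds $|\mathcal{L}|\ge q$, $|\mathcal{R}|\ge p$.
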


\begin{proof}
Since $G_{B\otimes I_n}$ can be viewed as $n$ disconnected copies of $G_B$, ``left type'' absorbing sets lie within a single copy of $G_B$. Since $G_B \cong K_{m,n}$ with labeled edges, the ``left type'' absorbing sets are of the form $(2k,0)$ for $1 \leq k \leq \lfloor \frac{n}{2}\rfloor$, and there are $n \cdot \binom{n}{2k}$ such absorbing sets. In particular, the minimal ``left type'' absorbing sets are $(2,0)$ absorbing sets, and there are $n\cdot \binom{n}{2}$ of them. The proof is analogous for ``right type'' absorbing sets.

To characterize the minimal ``mixed type'' absorbing sets, we first demonstrate the existence of ``mixed type'' absorbing sets with parameters as in \eqref{eq:mixed_type} and then show their minimality. 

Select a variable node $v_{i, \ell}^R$. Choose $\lfloor \frac{\deg(v_{i, \ell}^R)}{2}\rfloor+1=p$ left variable nodes $v_{k,j}^L$, each corresponding to columns from different blocks $B$, to ensure $v_{i, \ell}^R$ satisfies the absorbing set condition. 

To satisfy the absorbing set condition for a particular chosen left variable node $v_{k,j}^L$, we must choose $\lfloor \frac{m}{2}\rfloor$ additional right variable nodes, each corresponding to a column from different blocks $B^*$. Satisfying the absorbing set condition for one of these left variable nodes simultaneously satisfies the absorbing set condition for all of the left variable nodes. Similarly, all of these additional right variable nodes satisfy the absorbing set condition since the first right variable node does. We refer to Figure~\ref{fig:G_B_X} for a visual example of these choices. In total, we chose $1+(\lfloor \frac{n}{2} \rfloor + 1) + \lfloor \frac{m}{2}\rfloor=p+q$ variable nodes. Each node has exactly one more even than odd degree check node neighbors, resulting in an absorbing set with parameters as in \eqref{eq:mixed_type}. The number of such absorbing sets comes from how these choices were made, as described above.

Suppose $\mathcal{A}=\mathcal{L}\cup \mathcal{R}$ is a minimal ``mixed type'' absorbing set, where $\mathcal{L}$ is the set of left variable nodes in $\mathcal{A}$ and $\mathcal{R}$ is the set of right variable nodes in $\mathcal{A}$. Since $\mathcal{A}$ doesn't contain any minimal left type absorbing sets or right type absorbing sets, $|\mathcal{L}| \geq q$ and $|\mathcal{R}| \geq p$. To see this, suppose $|\mathcal{L}| < q$. For a right variable node to satisfy the absorbing set condition, it must share a check node with at least one other right type variable node, immediately yielding a minimal ``right type'' absorbing set. Thus, the absorbing sets we describe achieve the parameter of minimal ``mixed type'' absorbing sets. 
\end{proof}

Understanding the minimal absorbing sets of $G_{B_{\mathsf{X}}}$ allows us to analyze the resulting absorbing sets in the lifted graph. 
\begin{corollary}\label{lem:absorbing_set_lift}

Let $\mathcal{A}$ be a $(2,0)$ absorbing set in $G_{B_\mathsf{X}}$, and let $\gamma$ denote the corresponding cycle in the induced subgraph $G_A$ with permutation product $\iota(\gamma)$. The set $\mathcal{A}$ lifts to a $(2\cdot \text{ord}(\iota(\gamma)),0)$ absorbing set in $G_{H_{\mathsf{X}}}$, with $\text{ord}(\iota(\gamma))=\frac{r}{\text{gcd}(r,\iota(\gamma))}$. 
\end{corollary}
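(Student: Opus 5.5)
Let $\mathcal{A}=\{x,y\}$ be a $(2,0)$ absorbing set in $G_{B_\mathsf{X}}$. By Theorem~\ref{thm:min_left_right_absorbing_set}, $\mathcal{A}$ is of left or right type, so $x$ and $y$ lie in a single copy $G_i$ or $G'_j$ of $G_B\cong K_{m,n}$ (or of $G_{B^*}$). In this copy $x$ and $y$ have the same neighbourhood. Every check node of the induced subgraph $G_{\mathcal{A}\cup\mathcal{N}(\mathcal{A})}$ therefore has degree exactly $2$, consistent with $b=0$.

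I will work in the case where $G_{\mathcal{A}}$ is a single $4$-cycle $\gamma=x\sim c\sim y\sim c'\sim x$, which is what the statement presupposes. Here the common neighbourhood is $\{c,c'\}$, as happens when $B$ has two rows in the left case, or more generally under the paper's reading. Write the permutation product of $\gamma$ as $P^{t}$, with $t=\iota(\gamma)$ viewed as an exponent in $\mathbb{Z}_r$.

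The first step is to describe the preimage. Let $\pi:G_{H_\mathsf{X}}\to G_{B_\mathsf{X}}$ be the covering map. The subgraph $\pi^{-1}(G_{\mathcal{A}\cup\mathcal{N}(\mathcal{A})})$ equals $\pi^{-1}(\gamma)$, and by Theorem~\ref{thm:gross_tucker1} it is a disjoint union of $r/\mathrm{ord}(P^t)$ cycles, each of length $4\,\mathrm{ord}(P^t)$. I take a single component $\tilde\gamma$ and let $\tilde{\mathcal{A}}$ be its variable nodes, of which there are $2\,\mathrm{ord}(P^t)$.

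The second step checks that $\tilde{\mathcal{A}}$ is absorbing with $b=0$. Each check node in $\tilde\gamma$ has exactly two neighbours in $\tilde{\mathcal{A}}$. The reason is that its fibre lies over $c$ or $c'$, whose two $\mathcal{A}$-neighbours lift through perfect matchings, so the lifted check is adjacent to exactly one lift of $x$ and one lift of $y$. Those two lifts are consecutive on the lifted walk, so both lie in $\tilde\gamma$.

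This argument needs the induced subgraph on $\tilde{\mathcal{A}}\cup\mathcal{N}(\tilde{\mathcal{A}})$ to be exactly $\tilde\gamma$. That holds because every check neighbour of a lift of $x$ lies over a neighbour of $x$, namely $c$ or $c'$, and each lifted edge is unique. Consequently every check is even, so $b=0$. Each variable node has two even neighbours and zero odd ones; this is "strictly more even than odd" only inside the induced subgraph, matching the paper's convention. This gives a $(2\,\mathrm{ord}(\iota(\gamma)),0)$ absorbing set.

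The last step is the elementary fact that $P^t$ has order $r/\gcd(r,t)$ in $\mathbf{C}_r\cong\mathbb{Z}_r$. The argument is the same as in the $2\times2$ connectivity lemma: $P^{tm}=I$ if and only if $r\mid tm$, which holds if and only if $\frac{r}{\gcd(r,t)}\mid m$.

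I expect the main obstacle to be the well-definedness of "the corresponding cycle" when $|\mathcal{N}(\mathcal{A})|>2$. In that case $G_{\mathcal{A}}$ is a $K_{2,k}$, and the lift is governed by the group generated by several $4$-cycle products rather than by one. I would handle this by stating the result for the $4$-cycle case, or by reading $\iota(\gamma)$ as the generator of $\mathbf{C}_r(G_{\mathcal{A}})$ and applying Theorem~\ref{thm:gross_tucker2} in place of Theorem~\ref{thm:gross_tucker1}. The component count and the absorbing-set verification above go through unchanged.
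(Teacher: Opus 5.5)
Your proposal is correct and follows essentially the same route as the paper. Both lift the $4$-cycle $\gamma$: the paper says the closed walk closes up after $\mathrm{ord}(\iota(\gamma))$ traversals, and you invoke Theorem~\ref{thm:gross_tucker1}. Both then observe that each lifted check has exactly two neighbours among the lifted variable nodes, which gives the $(2\,\mathrm{ord}(\iota(\gamma)),0)$ parameters. Your check that the induced subgraph is exactly $\tilde\gamma$ fills in a point the paper's short proof leaves implicit. So does your remark that when $|\mathcal{N}(\mathcal{A})|>2$ the right quantity is the subgroup $\mathbf{C}_r(G_{\mathcal{A}})$, handled via Theorem~\ref{thm:gross_tucker2}.
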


\begin{proof}
Given the cycle $\gamma$ in the induced subgraph $G_{A}$, a closed walk along the cycle $\gamma$ lifts to a closed walk exactly when it traverses the base cycle $\text{ord}(\omega(\gamma))$ times. Moreover, in this case, each check node in this lifted cycle is again incident to exactly two variable nodes, so we obtain an $(a',0)$ absorbing set for $a'=2\cdot \text{ord}(\omega(\gamma))$, where $a' \geq 4$ precisely when the connectivity conditions are satisfied.
\end{proof}

\begin{remark}
Assuming the connectivity conditions of Section \ref{section:connectivity} for $m \times n$ $B$ arising from $\mathbf{C}_r$, every $(2,0)$ absorbing set in $G_{B_\mathsf{X}}$ lifts to an $(a',0)$ absorbing set for $a' \geq 4$.     
\end{remark}

\section{Constructing the base matrix}\label{section:constructing_B}

We turn our attention to describing suitable choices of lift size and permuation entries for a given $m \times n$ base matrix $B$. We wish to optimize the entries of $B$ according to the following hierarchy: ensuring connectivity of $G_{H_{\mathsf{X}}}$, maximizing $g(G_{H_{\mathsf{X}}})$, minimizing the number of short cycles of $G_{H_{\mathsf{X}}}$, minimizing the number of minimal absorbing sets of $G_{H_{\mathsf{X}}}$, and maximizing the expansion properties of $G_{H_{\mathsf{X}}}$. 

Since the conditions for the connectivity of $G_{H_{\mathsf{X}}}$ depend on analyzing the fundamental cycles of $G_{B_{\mathsf{X}}}$, we establish a systematic choice of spanning tree for $G_{B_{\mathsf{X}}}$ by choosing the first row and column of $B$ to have the identity permutation values. The fundamental cycles of $G_{B_{\mathsf{X}}}$ then have permutation product equal to the permutation product of the additional edges added, resulting in easier analysis.

These $m+n-1$ edges of $G_B$ extend to $m^2+n^2 + 2mn -m -n$ edges in $G_{B_{\mathsf{X}}}$. Since the graph $G_{B_{\mathsf{X}}}$ has $m^2+n^2+mn$ vertices, a spanning tree for $G_{B_{\mathsf{X}}}$ has $m^2 + n^2 + mn-1$ edges. We thus take a subset of these $m^2+n^2 + 2mn -m -n$ edges which forms a spanning tree for $G_{B_{\mathsf{X}}}$.

Given an $r$-lift of $B_{\mathsf{X}}$, to ensure connectivity of $G_{H_{\mathsf{X}}}$, the permutation products of the fundamental cycles of $G_{B_{\mathsf{X}}}$ must generate $\mathbf{C}_r$. To avoid $4$-cycles in the lift of $G_{B_{\mathsf{X}}}$, we wish to avoid $2 \times 2$ submatrices of $B$ with trivial permutation product when choosing the permutations for the remaining $(m-1)(n-1)$ entries of $B$. We also wish to avoid choices for the entries of $B$ that yield $2 \times 2$ submatrices with trivial permutation product in the construction of $B_{\mathsf{X}}$, even if they don't result in $2 \times 2$ submatrices with trivial permutation product in $B$.

\begin{theorem}\label{thm:min_lift_size}
Given $m \times n$ base matrix $B$ such that the first row and column have identity permutation values, if $G_{H_{\mathsf{X}}}$ has girth $8$, then the lift size $r$ is such that $r \geq \max(m,n)$. 
\end{theorem}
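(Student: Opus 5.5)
We argue by contraposition via a pigeonhole argument on the entries of $B$. By Theorem~\ref{thm:gross_tucker1}, a $4$-cycle of $G_{B_{\mathsf{X}}}$ whose permutation product is trivial (order $m=1$) lifts to $r$ disjoint $4$-cycles in $G_{H_{\mathsf{X}}}$. So if $g(G_{H_{\mathsf{X}}})=8$, every $4$-cycle of $G_{B_{\mathsf{X}}}$ must have nontrivial product. In particular this holds for every $4$-cycle inside a copy $G_i$ of $G_B$, since such a cycle is a $4$-cycle of $G_{B_{\mathsf{X}}}$ with the same product. Hence every $2\times 2$ submatrix of $B$, with rows $i_1,i_2$ and columns $j_1,j_2$, must satisfy
\[
\epsilon_{i_1j_1}-\epsilon_{i_1j_2}+\epsilon_{i_2j_2}-\epsilon_{i_2j_1}\not\equiv 0 \pmod r .
\]

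Next we specialize to submatrices that contain the normalized first row and first column. Since $\epsilon_{1j}=\epsilon_{i1}=0$ for all $i,j$, the submatrix on rows $1,i$ and columns $1,j$ has product $P^{\epsilon_{ij}}$, so $\epsilon_{ij}\neq 0$ for all $i,j\geq 2$. Now fix a row $i\geq 2$ and two columns $j\neq j'$ with $j,j'\geq 2$. The submatrix on rows $1,i$ and columns $j,j'$ has exponent $\epsilon_{ij'}-\epsilon_{ij}$ up to sign, which must be nonzero mod $r$. Therefore the $n$ entries $\epsilon_{i1}=0,\epsilon_{i2},\dots,\epsilon_{in}$ of row $i$ are pairwise distinct residues mod $r$. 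This already forces $r\geq n$, provided $m\geq 2$.

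The symmetric argument applied to a fixed column $j\geq 2$, using submatrices on columns $1,j$ and rows $i,i'$, shows that the $m$ entries of that column are pairwise distinct mod $r$. This gives $r\geq m$ whenever $n\geq 2$, and combining the two bounds yields $r\geq\max(m,n)$.

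The main obstacle is the degenerate case where $m=1$ or $n=1$. There the pigeonhole step has nothing to work with, so we must either note that the girth-$8$ hypothesis forces $m,n\geq 2$ or handle this case directly. We must also make sure the girth-$8$ hypothesis really excludes trivial $4$-cycles in the base graph. Here we invoke Theorem~\ref{thm:gross_tucker1} carefully: a trivial-product cycle lifts to $4$-cycles, contradicting girth $8$. No deeper use of the copies $G'_j$ of $G_{B^*}$ is needed, since the constraints coming from $G_B$ alone already suffice.
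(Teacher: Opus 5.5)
Your proof is correct and follows the paper's argument. The $2\times 2$ submatrices through the normalized first row and column force the entries of each row, and of each column, of $B$ to be pairwise distinct mod $r$, and pigeonhole then gives $r\ge\max(m,n)$; your row-wise and column-wise formulation is more precise than the paper's ``unique entries.'' The degenerate case you leave open is vacuous: for $m=1$ or $n=1$ the graph $G_{B_{\mathsf{X}}}$ is a tree, so $G_{H_{\mathsf{X}}}$ is acyclic and cannot have girth $8$.
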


\begin{proof}
The matrices $\begin{bmatrix}
    1 & 1\\
    \alpha & \alpha
\end{bmatrix}$ and $\begin{bmatrix}
    1 & \alpha\\
    1 & \alpha 
\end{bmatrix}$ have trivial permutation product. To avoid such submatrices in $B$, the $(m-1)\times (n-1)$ submatrix of $B$ obtained by deleting the first row and column of $B$ must have unique entries from $\mathbf{C}_r\setminus \{P^0\}$. By the pigeonhole principle, we must choose $\mathbf{C}_r$ such that $r \geq \text{max}(m,n)$ to guarantee that no repeated entries.
\end{proof}

The matrices $\begin{bmatrix}
    1 & 1\\
    \alpha & \alpha
\end{bmatrix}$ and $\begin{bmatrix}
    1 & \alpha\\
    1 & \alpha 
\end{bmatrix}$ are special cases of when the submatrix $\begin{bmatrix}
  \alpha & \beta\\
  \gamma & \delta
\end{bmatrix}$ has trivial permutation product. That is, when $\alpha - \beta + \delta -\gamma =0 \pmod{r}$ given $\mathbf{C}_r$. Equivalently, we must avoid entries such that $\alpha + \delta = \beta + \gamma \pmod{r}$. It remains an open question to analyze these $2 \times 2$ submatrices of $B$ to obtain not just necessary, but sufficient conditions to ensure that $G_{H_{\mathsf{X}}}$ has maximum possible girth of $8$, and the effect of these conditions on the other properties of these graphs that are relevant for decoding.

\section*{Conclusion}\label{section:conclusion}
This work analyzes the structure of the Tanner graphs corresponding to $H_{\mathsf{X}}$ and $H_{\mathsf{Z}}$ for a particular case of the lifted product code construction. We provide conditions to ensure that the graph is connected, which is necessary to ensure robust graph-based iterative decoding. We use these connectivity conditions to make observations about the girth of $G_{H_{\mathsf{X}}}$. Finally, we characterize and enumerate the minimal absorbing sets of $G_{B_{\mathsf{X}}}$ and describe how these absorbing sets lift to the graph $G_{H_{\mathsf{X}}}$. In \cite{raveendran2025minimumdistancesfinitelengthlifted}, the authors derive row and column partition, which guarantee the existence of low-weight logical operators. It would be interesting to see how these constraints translate to properties of the Tanner graphs. Future work includes further analysis on choosing the entries of the base matrix $B$ to optimize the properties of the Tanner graph for well-performing iterative decoding.

\bibliographystyle{IEEEtran}
\bibliography{references}

@article{Dolecek07,
  title={Analysis of Absorbing Sets for Array-Based {LDPC} Codes},
  author={Lara Dolecek and Zhengya Zhang and Venkat Anantharam and Martin J. Wainwright and Borivoje Nikoli{\'c}},
  journal={2007 IEEE International Conference on Communications},
  year={2007},
  pages={6261-6268}
}

@article{CS96,
  title = {Good quantum error-correcting codes exist},
  author = {Calderbank, A. R. and Shor, Peter W.},
  journal = {Phys. Rev. A},
  volume = {54},
  issue = {2},
  pages = {1098--1105},
  numpages = {0},
  year = {1996},
  month = {aug},
  publisher = {American Physical Society},
  doi = {10.1103/PhysRevA.54.1098},
  url = {https://link.aps.org/doi/10.1103/PhysRevA.54.1098}
}

@ARTICLE{T81,
  author={Tanner, R.},
  journal={IEEE Transactions on Information Theory}, 
  title={A recursive approach to low complexity codes}, 
  year={1981},
  volume={27},
  number={5},
  pages={533-547},
  doi={10.1109/TIT.1981.1056404}}

@article{mackay2004sparse,
  title={Sparse-graph codes for quantum error correction},
  author={MacKay, David JC and Mitchison, Graeme and McFadden, Paul L},
  journal={IEEE Transactions on Information Theory},
  volume={50},
  number={10},
  pages={2315--2330},
  year={2004},
  publisher={IEEE}
}

@article{Gottesman-overhead,
  title={Fault-tolerant quantum computation with constant overhead},
  author={Gottesman, Daniel},
  year={2014},
  month = {nov},
  journal={Quantum Inform. and Computation},
    volume = {14},
  issue = {15-16},
  pages = {1338--1372},
  doi = {https://doi.org/10.26421/QIC14.15-16-5}
}

@article{Morris2026AMC,
title = {Absorbing sets in quantum {LDPC} codes},
journal = {Advances in Mathematics of Communications},
volume = {23},
number = {0},
pages = {234-254},
year = {2026},
issn = {1930-5346},
doi = {10.3934/amc.2026023},
url = {https://www.aimsciences.org/article/id/69a6ac251c9579521d08fdc7},
author = {Kirsten D. Morris and Tefjol Pllaha and Christine A. Kelley}
}

@inproceedings{panteleev2022asymptotically,
  title={Asymptotically good quantum and locally testable classical {LDPC} codes},
  author={Panteleev, Pavel and Kalachev, Gleb},
  booktitle={Proceedings of the 54th annual ACM SIGACT symposium on theory of computing},
  pages={375--388},
  year={2022}
}

@misc{raveendran2025minimumdistancesfinitelengthlifted,
      title={On the Minimum Distances of Finite-Length Lifted Product Quantum {LDPC} Codes}, 
      author={Nithin Raveendran and David Declercq and Bane Vasić},
      year={2025},
      eprint={2503.07567},
      archivePrefix={arXiv},
      primaryClass={cs.IT},
      url={https://arxiv.org/abs/2503.07567}, 
}

@inproceedings{mao2001heuristic,
  title={A heuristic search for good low-density parity-check codes at short block lengths},
  author={Mao, Yongyi and Banihashemi, Amir H},
  booktitle={ICC 2001. IEEE International Conference on Communications. Conference Record (Cat. No. 01CH37240)},
  volume={1},
  pages={41--44},
  year={2001},
  organization={IEEE}
}

@article{halford2005algorithm,
  title={An algorithm for counting short cycles in bipartite graphs},
  author={Halford, Thomas R and Chugg, Keith M},
  journal={IEEE Transactions on Information Theory},
  volume={52},
  number={1},
  pages={287--292},
  year={2005},
  publisher={IEEE}
}

@article{karimi2012efficient,
  title={Efficient algorithm for finding dominant trapping sets of {LDPC} codes},
  author={Karimi, Mehdi and Banihashemi, Amir H},
  journal={IEEE Transactions on Information Theory},
  volume={58},
  number={11},
  pages={6942--6958},
  year={2012},
  publisher={IEEE}
}

@article{mitchell2014quasi,
  title={Quasi-cyclic {LDPC} codes based on pre-lifted protographs},
  author={Mitchell, David GM and Smarandache, Roxana and Costello, Daniel J},
  journal={IEEE Transactions on Information Theory},
  volume={60},
  number={10},
  pages={5856--5874},
  year={2014},
  publisher={IEEE}
}

@article{smarandache2025structural,
  title={Structural Analysis of Generalized Quasi-Cyclic {LDPC} Codes: Rank, Design and Generator Matrices},
  author={Smarandache, Roxana and Mitchell, David GM and G{\'o}mez-Fonseca, Anthony},
  journal={IEEE Transactions on Information Theory},
  year={2025},
  publisher={IEEE}
}

@article{dolecek2014non,
  title={Non-binary protograph-based {LDPC} codes: Enumerators, analysis, and designs},
  author={Dolecek, Lara and Divsalar, Dariush and Sun, Yizeng and Amiri, Behzad},
  journal={IEEE transactions on information theory},
  volume={60},
  number={7},
  pages={3913--3941},
  year={2014},
  publisher={IEEE}
}

@book{gross2001topological,
  title={Topological graph theory},
  author={Gross, Jonathan L and Tucker, Thomas W},
  year={2001},
  publisher={Courier Corporation}
}

@article{henderson1981vec,
  title={The vec-permutation matrix, the vec operator and Kronecker products: A review},
  author={Henderson, Harold V and Searle, Shayle R},
  journal={Linear and multilinear algebra},
  volume={9},
  number={4},
  pages={271--288},
  year={1981},
  publisher={Taylor \& Francis}
}

\end{document}